%% file: mainneutral.tex
\documentclass[dvipsnames]{article}

\input{preamble}

\newtheorem{remark}{Remark}

\newtheorem{theorem}{Theorem}

\usepackage{fullpage}

\title{The Complexity of Asynchronous HyperLTL}
\author{Gaëtan Regaud (ENS Rennes, Rennes, France)\\
Martin Zimmermann (Aalborg University, Aalborg, Denmark)}
\date{}

\begin{document}

\maketitle

\begin{abstract}
\input{abstract}
\end{abstract}

\input{content}

\bibliographystyle{plain}
\bibliography{bib}

\end{document}

%% file: preamble.tex
\usepackage[utf8]{inputenc}

\usepackage{amsmath}
\usepackage{amssymb}
\usepackage{amsthm}

\usepackage{xspace}

\usepackage{colortbl}

\usepackage{booktabs}

\usepackage{mathtools}

\usepackage{tikz}
\usetikzlibrary{arrows,decorations.text,decorations.pathmorphing,decorations.pathreplacing,positioning,patterns,tikzmark,arrows.meta,backgrounds,shapes.geometric}

\tikzset{  
  >=stealth',
  plain/.style 	= {draw, thick,circle, minimum size=8mm, inner sep=0mm}
}

\newcommand{\nats}{\mathbb{N}}
\renewcommand{\epsilon}{\varepsilon}
\renewcommand{\phi}{\varphi}
\newcommand{\size}[1]{|#1|}
\newcommand{\pow}[1]{2^{#1}}
\newcommand{\cceq}{\mathop{::=}}
\newcommand{\set}[1]{\{#1\}}

\newcommand{\F}{\mathop{\mathbf{F}\vphantom{a}}\nolimits}
\newcommand{\G}{\mathop{\mathbf{G}\vphantom{a}}\nolimits}
\DeclareMathOperator{\U}{\mathbf{U}}
\newcommand{\X}{\mathop{\mathbf{X}\vphantom{a}}\nolimits}

\newcommand{\ltl}{{LTL}\xspace}

\newcommand{\ctlstar}{{CTL$^*$}\xspace}
\newcommand{\hyltl}{{Hyper\-LTL}\xspace}
\newcommand{\hypdl}{{Hyper\-PDL-$\Delta$}\xspace}
\newcommand{\hyqptl}{{Hyper\-QPTL}\xspace}
\newcommand{\sohyltl}{{Hyper$^2$LTL}\xspace}
\newcommand{\hmu}{{$H_\mu$}\xspace}

\newcommand{\hyctlstar}{{HyperCTL$^*$}\xspace}

\newcommand{\hyqptlplus}{{Hyper\-QPTL$^+$}\xspace}
\newcommand{\hyaut}{HA\xspace}
\newcommand{\foe}{FO$[E,<]$\xspace}
\newcommand{\hyperfo}{HyperFO\xspace}

\newcommand{\ap}[0]{\mathrm{AP}}

\newcommand{\tower}{\textsc{Tower}\xspace}

\newcommand{\myquot}[1]{``#1''}

\newcommand{\tsys}{\mathfrak{T}}
\newcommand{\traces}{\mathrm{Tr}}

\newcommand{\prop}{\texttt{p}}

\newcommand{\hyperize}{{\mathit{hyp}}}

\newcommand{\pair}{\mathit{pair}}

\newcommand{\arith}{\mathit{arith}}

\newcommand{\natsstruct}{(\nats, +, \cdot, <, \in)}

\newcommand{\ghltl}{\textrm{GHyLTL$_\textrm{S+C}$}\xspace}

\newcommand{\hltls}{\textrm{HyperLTL$_\textrm S$}\xspace}
\newcommand{\hltlc}{\textrm{HyperLTL$_\textrm C$}\xspace}
\renewcommand{\phi}{\varphi}
\newcommand{\vars}{\textsf{VAR}}
\newcommand{\tra}{\sigma}

\newcommand{\p}{p}

\newcommand{\tr}{x}

\newcommand{\setL}{\mathcal{L}\xspace}

\newcommand{\dom}[1]{\textrm{Dom}(#1)\xspace}

\newcommand{\TS}{\mathcal{T}\xspace}

\newcommand{\labfunc}{\ell}

\newcommand{\auxprop}{\$}

\newcommand{\intprop}{\texttt{\#}}
\newcommand{\altprop}{\dagger}

\newcommand{\ahltl}{\mbox{AHLTL}\xspace}
\newcommand{\eahltl}{\mbox{E-AHLTL}\xspace}
\newcommand{\aahltl}{\mbox{A-AHLTL}\xspace}
\newcommand{\varsocc}[1]{\mathrm{vars}(#1)}
\newcommand{\forward}[4]{\mathit{fwd}_{#2, #3}(#1,#4)}
\newcommand{\onestepforward}[3]{\mathit{os\text{-}fwd}_{#2,#3}(#1)}
\newcommand{\forwardabbr}{\mathit{fwd}}
\newcommand{\tracesabbr}{\mathrm{tr}}
\newcommand{\pointerabbr}{\mathrm{po}}
\newcommand{\osforwardabbr}{\mathrm{os-fwd}}
\newcommand{\E}{\mathsf{E}}
\newcommand{\A}{\mathsf{A}}

\newcommand{\gentquant}{\mathsf{Q}}
\newcommand{\genquant}{\text{\rotatebox[origin=c]{180}{\textsf{Q}}}}

\newcommand{\traceasg}{\Pi_\tracesabbr}
\newcommand{\pointerasg}{\Pi_\pointerabbr}

\newcommand{\init}{\mathrm{init}}

\usepackage{ tipa }

\newcommand{\encode}[1]{\mathit{enc}(#1)}

\newcommand{\propos}{\mathrm{props}}
\newcommand{\varis}{\mathrm{vars}}

%% file: abstract.tex
Hyperproperties express, e.g., information-flow properties of systems, which involves the simultaneous reasoning about multiple execution traces of a system. 
Consequently, HyperLTL, the most important specification logic for hyperproperties, extends LTL with quantification over traces. 
However, HyperLTL can only express synchronous hyperproperties. 

Recently, several logics for asynchronous hyperproperties have been proposed. Here, we focus on AHLTL, asynchronous HyperLTL, which extends HyperLTL with quantification over trajectories that control the relative speed at which time progresses on the quantified traces.
Model-checking AHLTL is known to be undecidable while satisfiability is known to be $\Sigma_1^1$-hard, but the precise complexity of both problems is open. 

Here, we close these gaps and show that model-checking is equivalent to truth in second-order arithmetic while satisfiability is $\Sigma_1^1$-complete if the trajectory is existentially quantified and $\Sigma_1^1$-hard and in $\Sigma_2^1$ if the trajectory is universally quantified.

%% file: content.tex
\section{Introduction}
\label{sec_intro}

The introduction of temporal logics for hyperproperties~\cite{ClarksonFKMRS14} has enabled the specification and automated verification of a wide variety of critical system properties, e.g., from information-flow security, privacy, diagnosability, and system reliability.
Initially, research focused on synchronous hyperproperties as expressed by, e.g., \hyltl and \hyctlstar~\cite{ClarksonFKMRS14}, which extend \ltl and \ctlstar by trace quantification, and thereby express properties of sets of traces.
For example, the \hyltl formula~$
\forall \tr\forall \tr' (\ell_\tr \leftrightarrow \ell_{\tr'}) \rightarrow \G (\ell_\tr \leftrightarrow \ell_{\tr'})
$
expresses observational determinism~\cite{od}, i.e., that any two traces that start with the same low-security observations should at any time have the same low-security observations.

Model-checking is decidable for both logics~\cite{ClarksonFKMRS14,DBLP:conf/cav/FinkbeinerRS15}, which makes them attractive specification languages, witnessed by a plethora of model-checking tools~\cite{BF,DBLP:conf/tacas/BeutnerF23,planning,DBLP:conf/cav/CoenenFST19,DBLP:conf/cav/FinkbeinerRS15,DBLP:conf/atva/GonzalezJSSZ25}.
The complexity of the most important verification problems for synchronous logics, including \hyltl and \hyctlstar, has been thoroughly investigated; see the upper part of Table~\ref{table_knownresults} on Page~\pageref{table_knownresults}.

However, not every system is synchronous. 
Hence, today there are several specification logics for asynchronous hyperproperties, all based on different approaches to asynchronicity:

\begin{itemize}
    \item Asynchronous \hyltl (\ahltl)~\cite{baumeister} adds so-called trajectories to \hyltl, which intuitively specify the rates at which time on different traces evolves. 
    \item \hyltl with stuttering (\hltls)~\cite{DBLP:conf/lics/BozzelliPS21} changes the semantics of the temporal operators of \hyltl so that time does not evolve synchronously on all traces, but instead evolves based on \ltl-definable stuttering.
    \item \hyltl with contexts (\hltlc)~\cite{DBLP:conf/lics/BozzelliPS21} adds a context-operator to \hyltl, which allows to select a subset of traces on which time progresses synchronously, while it is frozen on all others. 
    \item Generalized \hyltl with stuttering and contexts (\ghltl)~\cite{DBLP:conf/fsttcs/BombardelliB0T24} adds both stuttering and contexts to \hyltl and additionally allows trace quantification under the scope of temporal operators, which \hyltl does not allow. 
    \item \hmu~\cite{hmu} adds trace quantification to the linear-time $\mu$-calculus with asynchronous semantics for the modal operators. 
    \item Hypernode automata (\hyaut)~\cite{hypernode} combine automata and hyperlogic with stuttering. 
\end{itemize}
The known relations between these logics are depicted in Figure~\ref{fig_asynchlogics}.

\begin{figure}[t]
    \centering

    \begin{tikzpicture}[thick,yscale=1.25]
        \node[fill=gray!20,rounded corners] (hyltl) at (1,0.5) {\hyltl};
        \node[fill=gray!20,rounded corners,align=center,anchor = north] (shltls) at (4,2) {simple\\ \hltls};
        \node[fill=gray!20,rounded corners,align=center,anchor = north] (hltls) at (4,3) {\hltls};
        \node[fill=gray!20,rounded corners,align=center,anchor = north] (hltlc) at (-2,3) {\hltlc};
        \node[fill=gray!20,rounded corners,align=center,anchor = north] (sghltl) at (1,3) {simple\\ \ghltl};
        \node[fill=gray!20,rounded corners,align=center] (ghltl) at (1,4) {\ghltl};

        \node[fill=gray!20,rounded corners,align=center,anchor = north] (foe) at (6.5,2) {\foe};
        \node[fill=gray!20,rounded corners,align=center] (hyperfo) at (6.5,0.5) {\hyperfo};
        
        \node[fill=gray!20,rounded corners,align=center,anchor = north] (hyaut) at (8.5,2) {\hyaut};
        \node[fill=gray!20,rounded corners,align=center,anchor = north] (ahyltl) at (-4,2) {\ahltl};
        \node[fill=gray!20,rounded corners,align=center] (hmu) at (-4,4) {\hmu};

        \path[->, > = stealth]
        (hyltl) edge node[above] {\footnotesize \cite{expressiveness}} (ahyltl.south)
        (hyltl) edge node[above] {} (hltlc)
        (hyltl) edge node[above] {} (shltls)
        (shltls) edge node[above] {} (hltls)
        (shltls) edge[bend left] node[above] {} (sghltl) 
        (sghltl) edge node[above] {} (ghltl)
        (hltls) edge[bend right] node[above] {} (ghltl.east)
        (hltlc) edge[bend left] node[above] {} (ghltl.west)
        (hyltl) edge[<->] node[below] {\footnotesize \cite{FZ17}} (hyperfo)
        (hyperfo) edge node[above] {} (foe)
        (ahyltl) edge node[left ]{\footnotesize \cite{expressiveness}} (hmu)
        (foe) edge[bend right] node[above,yshift=10] {\footnotesize \cite{DBLP:conf/fsttcs/BombardelliB0T24}} (ghltl)
        (hltls.north west) edge[bend right=5] node[above, near end]{\footnotesize \cite{DBLP:conf/lics/BozzelliPS21}} (hmu)
        (hltlc) edge node[above]{\footnotesize \,\,\cite{DBLP:conf/lics/BozzelliPS21}} (hmu)
        ;

        \draw[dashed, rounded corners] (-5,1) -- (-.5,1) -- (-.5,3.25) -- (2.55, 3.25) -- (2.55, 2.2) -- (7.75,2.2) -- (7.75, 1) -- (9.25,1);
        
    \end{tikzpicture}
    
    \caption{The landscape of logics for asynchronous hyperproperties.
    Here, \foe is first-order logic with the equal-level predicate~$E$ and order~$<$ evaluated over sets of infinite words, and \hyperfo is a fragment that is equivalent to \hyltl~\cite{FZ17}.
    Arrows denote known inclusions (where the inclusion of \hyltl into \ahltl requires an additional proposition) and the dashed line denotes the decidability border for model-checking. For non-inclusions, we refer the reader to work on the expressiveness of asynchronous hyperlogics~\cite{expressiveness,DBLP:conf/fsttcs/BombardelliB0T24}.
    }
    \label{fig_asynchlogics}
\end{figure}

Unlike for synchronous hyperlogics, the complexity of many fundamental verification problems for asynchronous hyperlogics is still open. 
A first step has recently been made for generalized \hyltl with stuttering and contexts and its fragments, e.g., \hltls and \hltlc; see the middle part of Table~\ref{table_knownresults} for an overview.
However, for the other logics the picture is wide open: typically model-checking is undecidable for the full logic, but decidable for fragments, and satisfiability is $\Sigma_1^1$-hard if \hyltl can be embedded.

Here, we study the complexity of model-checking, satisfiability, and finite-state satisfiability for \ahltl. 
In \ahltl, one explicitly quantifies a trajectory that controls how time progresses on the different traces quantified before, i.e., asynchronicity is semantic. 
For example, the \ahltl formula~$
\forall \tr\forall \tr'\E (\ell_\tr \leftrightarrow \ell_{\tr'}) \rightarrow \G (\ell_\tr \leftrightarrow \ell_{\tr'})$
expresses observational determinism for asynchronous systems, i.e., for any two traces there must be a trajectory such that if the traces start with the same low-security observations, then the states visited under the stuttering induced by the trajectory have the same low-security observations.

We show that for existentially quantified trajectories (\eahltl), satisfiability is not harder than for \hyltl, which can also be embedded into \eahltl~\cite{expressiveness}.
Thus, \eahltl satisfiability is $\Sigma_1^1$-complete.
As the lower bound follows trivially from previous work, we only need to prove the upper bound. 
Here, we show that every formula of \ahltl (i.e., independently of how the trajectory is quantified) has a countable model. The existence of such a model, Skolem functions for it, and a trajectory, which are all second-order objects, and their correctness can then be expressed using a $\Sigma_1^1$ formula. 
For universally quantified trajectories, we use the same approach and obtain a $\Sigma_2^1$-upper bound, as we can still existentially quantify the model and Skolem functions, but need universal second-order quantification to capture the trajectory quantification.
For the lower bound, we show that \hyltl can also be embedded into \aahltl, i.e., \aahltl satisfiability is $\Sigma_1^1$-hard.

Then, we show that model-checking \aahltl is as hard as truth in second-order arithmetic, i.e., much harder than satisfiability. Intuitively, the reason is that one has to deal with uncountable models in model-checking, while we have proven that countable models suffice for satisfiability. 
From the lower bound for \aahltl model-checking, we immediately also obtain the same lower bound for \eahltl model-checking.
This is complemented by showing that model-checking \ahltl (i.e., independently of how the trajectory is quantified) is reducible to truth in second-order arithmetic, yielding a matching upper bound for the full logic \ahltl.

Finally, as for other hyperlogics with highly undecidable model-checking (see, e.g., second-order \hyltl~\cite{frz26} and \hyqptlplus~\cite{hyperqptlcomplexity}), model-checking for \ahltl (and its fragments) is inter-reducible with finite-state satisfiability, i.e., we obtain the same complexity for the latter problem.

Our results are listed in the lower third of Table~\ref{table_knownresults}.

\begin{table}[h]
    \setlength{\tabcolsep}{4pt}
    \centering
    \caption{List of complexity results for synchronous hyperlogics, fragments of \ghltl, and our results. \myquot{T2A-equiv.} (\myquot{T3A-equiv.}) stands for \myquot{equivalent to truth in second-order (third-order) arithmetic}. The result for \hypdl satisfiability can be shown using techniques developed  by Fortin et al.\ for \hyltl satisfiability~\cite{hypercomplexity}, the result for \hypdl finite-state satisfiability follows from the lower bound for its fragment~\hyltl and \hypdl model-checking being decidable.}
    
    \renewcommand{\arraystretch}{1.1}

    \begin{tabular}{llll}
    
       Logic &  Satisfiability  & Finite-state Satisfiability  &  Model-checking \\
         \midrule
    
        \hyltl & $\Sigma_1^1$-compl.~\cite{hypercomplexity}  & $\Sigma_1^0$-compl.~\cite{FinkbeinerH16} & \tower-compl.~\cite{Rabe16diss,MZ20} \\
        
        \rowcolor{lightgray!40} \hypdl & $\Sigma_1^1$-compl.  & $\Sigma_1^0$-compl. & $\tower$-compl.~\cite{hyperpdl}\\
        
        \hyqptl &  $\Sigma_1^2$-complete.~\cite{hyperqptlcomplexity} & $\Sigma_1^0$-compl.~\cite{Rabe16diss} & $\tower$-compl.~\cite{Rabe16diss}\\
        
        \rowcolor{lightgray!40}\hyqptlplus & T3A-equiv.~\cite{hyperqptlcomplexity} & T3A-equiv.~\cite{hyperqptlcomplexity} &   T3A-equiv.~\cite{hyperqptlcomplexity}\\
        
        \sohyltl & T3A-equiv.~\cite{frz26}  & T3A-equiv.~\cite{frz26}  &  T3A-equiv.~\cite{frz26}\\
        
        \rowcolor{lightgray!40}\hyctlstar & $\Sigma_1^2$-compl.~\cite{hypercomplexity}  & $\Sigma_1^0$-compl.~\cite{FinkbeinerH16} & \tower-compl.~\cite{Rabe16diss,MZ20} \\
    
         \midrule
    
\ghltl & $\Sigma_1^1$-compl.~\cite{genhyltlsc} & T2A-equiv.~\cite{genhyltlsc} & T2A-equiv.~\cite{genhyltlsc} \\
\rowcolor{lightgray!40}\hltls & $\Sigma_1^1$-compl.~\cite{genhyltlsc} & T2A-equiv.~\cite{genhyltlsc} & T2A-equiv.~\cite{genhyltlsc} \\
\hltlc & $\Sigma_1^1$-compl.~\cite{genhyltlsc} & T2A-equiv.~\cite{genhyltlsc} & T2A-equiv.~\cite{genhyltlsc} \\
\rowcolor{lightgray!40}simple \ghltl & $\Sigma_1^1$-compl.~\cite{genhyltlsc} & $\Sigma_1^0$-compl.~\cite{genhyltlsc} & \tower-compl.~\cite{DBLP:conf/fsttcs/BombardelliB0T24,Rabe16diss} \\
simple \hltls & $\Sigma_1^1$-compl.~\cite{genhyltlsc} & $\Sigma_1^0$-compl.~\cite{genhyltlsc} & \tower-compl.~\cite{DBLP:conf/fsttcs/BombardelliB0T24,Rabe16diss} \\

        \midrule
    
     \rowcolor{lightgray!40}   \ahltl & $\Sigma_1^1$-hard/in $\Sigma_2^1$ (Thms.~\ref{thm_Esat},\ref{thm_Asat})  & T2A-equiv. (Thm.~\ref{thm_mc}) & T2A-equiv. (Thm.~\ref{thm_fssat}) \\
        
         \aahltl & $\Sigma_1^1$-hard/in $\Sigma_2^1$ (Thm.~\ref{thm_Asat})  & T2A-equiv. (Thm.~\ref{thm_mc}) & T2A-equiv. (Thm.~\ref{thm_fssat})\\
        
   \rowcolor{lightgray!40}     \eahltl &  $\Sigma_1^1$-complete. (Thm.~\ref{thm_Esat}) & T2A-equiv.~(Thm.~\ref{thm_mc}) & T2A-equiv.~(Thm.~\ref{thm_fssat})\\

    \end{tabular}
    \label{table_knownresults}
\end{table}

\section{Preliminaries}
\label{sec_prels}

The set of nonnegative integers is denoted by $\nats$. 
We fix a countable set~$\ap$ of atomic propositions.

An alphabet is a nonempty finite set~$\Sigma$.
The set of infinite words over $\Sigma$ is denoted by $\Sigma^\omega$. Given $w \in \Sigma^\omega$ and $i \in\nats$, $w(i)$ denotes the $i$-th letter of $w$ (starting with $i =0$).
A trace~$\tra$ over $\ap$ is an element of $(\pow\ap)^\omega$.

A transition system is a tuple~$\TS=(V,E,I,\labfunc)$ where $V$ is a nonempty finite set of vertices, $E\subseteq V\times V$ is a set of directed edges, $I\subseteq V$ is a nonempty set of initial vertices, and $\labfunc: V\to \pow\ap$ is a labeling function that maps each vertex to a \emph{finite} set of propositions. We require that each vertex has at least one outgoing edge.
A run of $\TS$ is an infinite word~$v_0v_1\cdots\in V^\omega$ such that $v_0\in I$ and $(v_i,v_{i+1})\in E$ for all $i\in\nats$. The set~$\traces(\TS)=\{\labfunc(v_0)\labfunc(v_1)\cdots\mid v_0v_1\cdots\text{is a run of $\TS$}\}$ is the set of traces induced by $\TS$.

\subsection{Asynchronous HyperLTL}

Let $\vars$ be a countable set of (trace) variables. 
The syntax of \ahltl is given by the grammar
\[
\phi \cceq \exists\tr\phi\mid\forall\tr\phi \mid \E\psi \mid \A\psi  \qquad\qquad
\psi \cceq  \p_\tr \mid \lnot\psi\mid\psi\lor\psi\mid\X\psi\mid\psi\U\psi
\]
where $\p\in\ap$ and $\tr\in\vars$.
We use the usual syntactic sugar, e.g., $\land$, $\rightarrow$, $\leftrightarrow$, $\F$ (eventually), and $\G$ (always).
We denote the set of trace variables occurring as subscripts of propositions in $\phi$ by $\varsocc{\phi}$.
A formula is a sentence, if it has no free trace variables, which are defined as expected.
Note that each sentence must either contain the existential trajectory quantifier~$\E$ or the universal trajectory quantifier~$\A$.
We call the corresponding fragments (of sentences)~\eahltl and \aahltl.

To define the semantics of \ahltl, we need to introduce some more notation.
A trace assignment is a partial function~$\traceasg\colon\vars\to (\pow\ap)^\omega$ that maps trace variables to traces.
For $\tr\in\vars$ and $\tra\in(\pow\ap)^\omega$, the trace assignment~$\traceasg[\tr\mapsto \tra]$ maps $\tr$ to $\tra$ and each other~$\tr' \in\dom{\traceasg} \setminus\set{\tr}$ to $\traceasg(\tr')$.
A pointer assignment is a partial function~$\pointerasg \colon \vars \to \nats$ that maps trace variables to positions. 
For $I \subseteq \vars$, the pointer assignment~$\pointerasg+I$ maps each $\tr$ in the domain of $\pointerasg$ to $\pointerasg(\tr) +1$ if $x \in I$ and to $\pointerasg(\tr)$ otherwise.
Intuitively, $\pointerasg(\tr)$ will denote the current time on $\traceasg(\tr)$.

A trajectory for a quantifier-free formula~$\psi$ is an infinite sequence~$t(0)t(1)t(2)\cdots$ of nonempty subsets of $\varsocc{\psi}$.
Given a pointer assignment~$\pointerasg$, a trajectory~$t$, and $i \in \nats$, we define the one-step forward operator as $\onestepforward{\pointerasg}{t}{i} = \pointerasg + t(i)$
and the iterated forward operator as $\forward{\pointerasg}{t}{i}{0} = \pointerasg$ and 
\[\forward{\pointerasg}{t}{i}{k+1} = 
\forward{
\onestepforward{\pointerasg}{t}{i}
}
{t}
{i+1}
{k}
\]
for $k \ge 0$.
Intuitively, $\forward{\pointerasg}{t}{i}{k}$ increments each pointer~$\pointerasg(\tr)$ by 
\[
\size{\set{j \in \set{i, i+1, \ldots, i+k-1} \mid \tr \in t(j) }},
\]
i.e., the trajectory determines how time progresses on the level of individual traces.

The semantics\footnote{Our definition of the semantics differs syntactically from the original one by Baumeister et al.~\cite{baumeister}, e.g., we are updating pointers on the traces when evaluating temporal operators instead of updating each trace with a suitable suffix. These differences will simplify our proofs considerably. Nevertheless, both definitions are equivalent. 
} of \ahltl is defined for a set~$\setL$ of traces, a trace assignment~$\traceasg$, a pointer assignment~$\pointerasg$, a trajectory~$t$, and a position~$i$ on the trajectory as follows (note that not all objects are needed for all kinds of formulas):
\begin{itemize}
    \item $(\setL, \traceasg) \models \exists \tr\ \phi$ if there exists a $\tra \in \setL$ such that $(\setL, \traceasg[\tr \mapsto \tra]) \models \phi$,

    \item $(\setL, \traceasg ) \models \forall \tr\ \phi$ if for all $\tra \in \setL$ we have $(\setL, \traceasg[\tr \mapsto \tra]) \models \phi$,

    \item $(\setL, \traceasg ) \models \E \psi$ if there exists a trajectory~$t$ for $\psi$ such that $(\traceasg, \pointerasg^\init, t, 0) \models \psi$, where $\pointerasg^\init$ is the pointer assignment mapping all variables to $0$,

    \item $(\setL, \traceasg ) \models \A \psi$ if for all trajectories~$t$ for $\psi$ we have $(\traceasg, \pointerasg^\init, t, 0) \models \psi$,

    \item $(\traceasg, \pointerasg, t,i) \models \p_\tr$ if $\traceasg(\tr) = \tra$ and $\p \in \tra(\pointerasg(\tr))$,
    
    \item $(\traceasg, \pointerasg, t,i) \models \lnot \psi$ if $(\traceasg, \pointerasg, t,i) \not\models \psi$,

    \item $(\traceasg, \pointerasg, t,i) \models \psi_0 \lor \psi_1$ if $(\traceasg, \pointerasg, t,i) \models \psi_0$ or $(\traceasg, \pointerasg, t,i) \models \psi_1$,
    
    \item $(\traceasg, \pointerasg, t,i) \models \X \psi$ if $(\traceasg, \forward{\pointerasg}{t}{i}{1}, t,i+1) \models \psi$, and

    \item $(\traceasg, \pointerasg, t,i) \models \psi_0 \U \psi_1$ if there is a $k \ge 0$ such that $(\traceasg, \forward{\pointerasg}{t}{i}{k}, t,i+k) \models \psi_1$ and\\ $(\traceasg, \forward{\pointerasg}{t}{i}{k'}, t,i+k') \models \psi_0$ for all $0 \le k' < k$.
        
\end{itemize}

\begin{remark}
\label{remark_boolcombs}
While \ahltl sentences are required to be in prenex normal form (with the trajectory quantifier being the last one), the fragments \eahltl and \aahltl are closed under conjunction and disjunction, e.g., $(\forall \tr_0 \exists \tr_1 \A \psi) \vee (\exists \tr_2 \forall \tr_3 \A \psi')$ is equivalent to $\forall \tr_0 \exists \tr_1 \exists \tr_2 \forall \tr_3 \A \psi \vee \psi'$.
Here, as usual, we just have to assume, w.l.o.g., that the sentences all use pairwise disjoint sets of trace variables.
\end{remark}

We say that a nonempty set~$\setL$ of traces satisfies a sentence~$\phi$, if $(\setL, \traceasg^\emptyset) \models \phi$, where
$\traceasg^\emptyset$ denotes the trace assignment with empty domain.
A transition system~$\TS$ satisfies $\phi$, written $\TS \models \phi$, if $\traces(\TS) \models \phi$.

\begin{remark}[cf.\ Section~3.1 of \cite{expressiveness}]
\label{remark_trajaccess}
While there is no direct way for accessing the trajectory using \ahltl formulas, one can still indirectly access it.

Let $t$ be a trajectory, $\traceasg$ be a trace assignment,  and let $\altprop$ be a proposition. 
If $(\traceasg, \pointerasg^\init, t, i) \models \altprop_\tr \leftrightarrow \neg\X \altprop_\tr $ then $\tr \in t(i)$.
Hence, if $(\traceasg, \pointerasg^\init, t, 0) \models \G(\altprop_\tr \leftrightarrow \neg\X \altprop_\tr )$ then $\tr \in t(i)$ for all $i$.

Now assume $\traceasg(\tr) = \tra$ satisfies $\altprop \in \tra(n)$ if and only if $\altprop \notin \tra(n+1)$ for all $n$, i.e., the truth value of $\altprop$ alternates in $\tra$. 
Then, $(\traceasg, \pointerasg^\init, t, i) \models \altprop_\tr \leftrightarrow \neg \X \altprop_\tr $ if and only if $\tr \in t(i)$.
Hence, $(\traceasg, \pointerasg^\init, t, 0) \models \G(\altprop_\tr \leftrightarrow \neg \X \altprop_\tr) $ if and only if $\tr \in t(i)$ for all $i$.
\end{remark}

We are interested in satisfiability, finite-state satisfiability, and model-checking for \ahltl:
\begin{itemize}
    \item Satisfiability: Given a sentence~$\phi$ of \ahltl, is there a nonempty set~$\setL$ of traces with $\setL \models \phi$?

    \item Finite-state satisfiability: Given a sentence~$\phi$ of \ahltl, is there a transition system~$\TS$ (which is finite by definition) such that $\TS \models \phi$?

    \item Model-checking: Given a transition system~$\TS$ and a sentence~$\phi$, do we have $\TS \models \phi$?
\end{itemize}

\subsection{Arithmetic and Complexity Classes for Undecidable Problems.}

To capture the complexity of undecidable problems, we consider formulas of arithmetic, i.e., predicate logic with signature~$(+, \cdot, <, \in)$, evaluated over the structure~$\natsstruct$. 
A type~$0$ object is a natural number in $\nats$ and a type~$1$ object is a subset of $\nats$.
In the following, we use lower-case roman letters (possibly with decorations) for first-order variables, and upper-case roman letters (possibly with decorations) for second-order variables.
Every fixed natural number is definable in first-order arithmetic, so we freely use them as syntactic sugar. For more detailed definitions, we refer to \cite{Rogers87}.

Our benchmark is second-order arithmetic, i.e., predicate logic with quantification over type~$0$ and type~$1$ objects. 
Arithmetic formulas with a single free first-order variable define sets of natural numbers. In particular, $\Sigma_1^1$ contains the sets of the form
\[\set{x \in\nats \mid \exists X_1 \subseteq \nats , \ldots, \exists X_k\subseteq \nats \text{ such that } \natsstruct\models \psi(x, X_1, \ldots,X_k )},\] where $\psi$ is a formula of arithmetic with arbitrary quantification over type~$0$ objects (but no second-order quantifiers). 
Similarly, $\Sigma_2^1$ contains the sets of the form
\[\set{x \in\nats \mid \exists X_1 \subseteq \nats , \ldots, \exists X_k\subseteq \nats,
\forall X_1' \subseteq \nats, \ldots, \forall X_{k'}'\subseteq \nats \text{ we have }
\natsstruct\models \psi(x, X_1, \ldots,X_k,X_1', \ldots,X_{k'}' )},\] where $\psi$ is a formula of arithmetic with arbitrary quantification over type~$0$ objects (but no second-order quantifiers).
Furthermore, truth in second-order arithmetic is the following problem: Given a sentence~$\phi$ of second-order arithmetic, do we have $\natsstruct\models\phi$?

\section{Satisfiability}
\label{sec_sat}

In this section, we study the satisfiability problem for \ahltl. 
Our upper bounds rely on a \myquot{small} model property, which is proven similarly to the analogous result for \hyltl (see the appendix).

\begin{theorem}
\label{thm_modelsize}
Every satisfiable \ahltl sentence has a countable model.
\end{theorem}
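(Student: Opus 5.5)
We follow the Löwenheim--Skolem-style argument used for \hyltl by Finkbeiner and Zimmermann: given a model, we close a single trace under Skolem functions for the existential trace quantifiers. Fix a satisfiable sentence~$\phi = \gentquant_1 \tr_1 \cdots \gentquant_n \tr_n \, \genquant\psi$, where each $\gentquant_j \in \set{\exists,\forall}$ and $\genquant \in \set{\E,\A}$, together with a model~$\setL \models \phi$. The key observation is that the trajectory quantifier only ever inspects the traces assigned to $\tr_1,\ldots,\tr_n$; it does not depend on $\setL$ itself. So for every trace assignment~$\traceasg$ with domain $\set{\tr_1,\ldots,\tr_n}$ the truth of $(\setL,\traceasg)\models\genquant\psi$ is independent of the ambient set. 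Formally, $(\setL,\traceasg)\models\genquant\psi$ if and only if $(\setL',\traceasg)\models\genquant\psi$ for all sets~$\setL,\setL'$. This is immediate from the semantics, since the clauses for $\E$ and $\A$, and for all quantifier-free formulas, never refer to $\setL$. Thus $\genquant\psi$ behaves like an arbitrary property of $n$-tuples of traces, and its asynchronous nature plays no role.

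Next, we fix Skolem functions. For each existentially quantified $\tr_j$, choose a function $f_j\colon\setL^{u_j}\to\setL$, where $u_j$ is the number of universal quantifiers preceding $\tr_j$. We choose these functions, using the axiom of choice, so that any assignment built by picking universal traces freely from $\setL$ and existential traces via the $f_j$ satisfies $\genquant\psi$. Such functions exist because $\setL\models\phi$. If $\phi$ starts with existential quantifiers, the corresponding $f_j$ are constants.

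We then build the countable model. Pick any $\tra_0\in\setL$, which exists since $\setL$ is nonempty, and let $S_0=\set{\tra_0}$. Set $S_{m+1}=S_m\cup\set{f_j(\tra'_1,\ldots,\tra'_{u_j})\mid j \text{ existential},\ \tra'_i\in S_m}$, and let $\setL'=\bigcup_m S_m$. Each $S_m$ is finite, so $\setL'\subseteq\setL$ is countable and nonempty. Moreover, $\setL'$ is closed under all the $f_j$, because any finite tuple of elements of $\setL'$ already lies in some $S_m$.

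Finally, we show $\setL'\models\phi$ by induction over the quantifier prefix, resolving existential quantifiers with the Skolem functions. Universal choices now range over $\setL'\subseteq\setL$, so the Skolem guarantee still applies. Existential witnesses $f_j(\ldots)$ lie in $\setL'$ by closure. At the innermost level, the resulting assignment $\traceasg$ satisfies $(\setL,\traceasg)\models\genquant\psi$, and by the independence observation also $(\setL',\traceasg)\models\genquant\psi$.

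The only step needing care is that independence observation. One must check that trajectories range over sequences of subsets of $\varsocc{\psi}$ only, and that the pointer semantics refers solely to $\traceasg$. Both hold by the definitions, so the argument goes through whether the trajectory quantifier is $\E$ or $\A$.
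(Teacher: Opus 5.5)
Your proposal is correct and follows essentially the same route as the paper. It fixes Skolem functions on the given model and closes a single trace under them to obtain a countable subset. It then concludes by an induction over the quantifier prefix, using that satisfaction of the trajectory-quantified matrix depends only on the trace assignment and not on the ambient set of traces.
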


\begin{proof}
Let $\phi = \genquant_0 \tr_0 \genquant_1 \tr_1 \cdots \genquant_{k-1}\tr_{k-1}\gentquant \psi$ be a satisfiable \ahltl sentence, where $\genquant_j \in \set{\exists,\forall}$ for all $j$ and $\gentquant \in \set{\E,\A}$.
Note that this implies that $\psi$ is quantifier-free.
As $\phi$ is satisfiable, there is a nonempty~$\setL$ such that $\setL\models \phi$. We construct a countable subset~$\setL_\omega \subseteq \setL$  with $\setL_\omega \models \phi$.

For every existentially quantified trace variable~$\tr$ in $\phi$, let $U_{\tr}$ be the set of trace variables quantified universally before $\tr$.
For every such $\tr$, say with $U_{\tr} = \set{ \tr_{j_1},\ldots, \tr_{j_{\size{U_{\tr}}}}}$, there exists a Skolem function~$f_{\tr}\colon \setL ^{\size{U_{\tr}}}\rightarrow \setL $ such that for each trace assignment~$\traceasg$ mapping universally quantified variables to traces in $\setL$ and each existentially quantified variable~$\tr$ to $f_{\tr}(\traceasg(\tr_{j_1}),\ldots, \traceasg(\tr_{j_{\size{U_{\tr}}}}))$, we have $(\setL, \traceasg)\models \gentquant\psi$.
Note that $(\setL, \traceasg)\models \gentquant\psi$ is independent of $\setL$, as $\gentquant\psi$ does not contain trace quantifiers.

We fix such Skolem functions~$f_{\tr}$ for the rest of the proof. For a set~$\setL'\subseteq \setL $ of traces and an existentially quantified variable~$\tr$, we define 
\[f_{\tr}(\setL')=\{f_{\tr}(\tra_{j_1},\ldots, \tra_{j_{\size{U_{\tr}}}})\mid \tra_{j_1},\ldots , \tra_{j_{\size{U_{\tr}}}} \in \setL'\}.
    \]
    Note that if $\setL'$ is finite, then $f_{\tr}(\setL ')$ is also finite.
    Now, define $\setL_0=\{\tra\}$ for some arbitrary $\tra\in \setL $ and $\setL_{i+1}=\setL_i\cup\bigcup_{\tr} f_{\tr}(\setL_i)$ for all $i\in\nats$, where $\tr$ ranges over the existentially quantified variables in $\phi$.
    Then, $\setL_\omega=\bigcup_{i\in\nats} \setL_i$ is is a countable union of finite sets and therefore countable.
    Thus, it remains to show that it is also a model of $\phi$.
    
    By definition of Skolem functions, every trace assignment~$\traceasg$ that maps universally quantified variables to traces in $\setL_\omega$ and uses the Skolem functions for existentially quantified variables satisfies $(\setL,\traceasg)\models\gentquant\psi$. Thus, we also have $(\setL_\omega,\traceasg)\models\gentquant\psi$, as there are no trace quantifiers in $\gentquant\psi$.
    Now, an induction over the quantifier prefix of $\phi$ (from the inside out) shows that $(\setL_\omega,\traceasg)\models \genquant_j\tr_j\ldots \genquant_{k-1}\tr_{k-1}\gentquant\psi$ for all $j$, i.e., $\setL_\omega$ is a model of $\phi$.
    Here, we rely on the fact that $\setL_\omega$ is closed under the application of the Skolem functions, i.e., if $\tra_{j_1},\ldots , \tra_{j_{\size{U_{\tr}}}} \in \setL_\omega$, then $f_{\tr}(\tra_{j_1},\ldots, \tra_{j_{\size{U_{\tr}}}}) \in \setL_\omega$.
\end{proof}

Recall that every \ahltl sentence is either in \eahltl (if the trajectory is quantified existentially) or in  \aahltl (if the trajectory is quantified universally).
We treat both fragments individually. 

\begin{theorem}
\label{thm_Esat}    
\eahltl satisfiability is $\Sigma_1^1$-complete.
\end{theorem}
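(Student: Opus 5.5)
The argument has two halves: hardness is inherited from \hyltl, and membership comes from the countable-model property of Theorem~\ref{thm_modelsize}.

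\textbf{Lower bound.} \hyltl satisfiability is $\Sigma_1^1$-hard~\cite{hypercomplexity}. \hyltl embeds into \eahltl, using one additional proposition~\cite{expressiveness}. The idea of the embedding is to add a fresh proposition~$\altprop$. We conjoin to the body the requirement that every quantified trace alternates $\altprop$, and that $\G(\altprop_\tr \leftrightarrow \neg\X\altprop_\tr)$ holds for all quantified $\tr$. By Remark~\ref{remark_trajaccess}, this forces the existentially chosen trajectory to be the synchronous one. I would check that this reduction is computable and preserves satisfiability in both directions. From a \hyltl model, add the alternating $\altprop$-labelling to every trace. Conversely, projecting away $\altprop$ yields a \hyltl model. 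The relativisation of trace quantifiers to traces with alternating $\altprop$ is handled by guards: universals get $\G(\altprop_\tr\leftrightarrow\neg\X\altprop_\tr)\rightarrow\dots$, existentials get a conjunction. Since the guard is evaluated under the trajectory, I will state it carefully so that it is checked under a trajectory that moves all traces. Existentially choosing the synchronous trajectory makes this sound.

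\textbf{Upper bound.} By Theorem~\ref{thm_modelsize}, $\phi = \genquant_0\tr_0\cdots\genquant_{k-1}\tr_{k-1}\E\psi$ is satisfiable if and only if it has a countable model. We may take this model indexed by $\nats$, namely a function $n \mapsto \tra_n$ with repetitions allowed. I write a $\Sigma_1^1$ formula that existentially quantifies the following sets:
\begin{itemize}
\item a set $T\subseteq\nats$ encoding the traces, with $(n,j,p)\in T$ iff $p\in\tra_n(j)$, where only the finitely many propositions in $\phi$ matter;
\item for each existential variable $\tr$, a set encoding its Skolem function $f_\tr\colon\nats^{\size{U_\tr}}\to\nats$ on indices, with functionality expressed first-order;
\item a set encoding, for every tuple of universal indices, a trajectory, i.e.\ $(\bar n, i, j)$ meaning $\tr_j\in t_{\bar n}(i)$, with nonemptiness of each $t_{\bar n}(i)$ expressed first-order.
\end{itemize}
This works because Skolemization extends to the trajectory quantifier: the existential trajectory depends on all trace indices.

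It remains to express, first-order in these sets, that for all universal index tuples $\bar n$ the assignment given by the Skolem functions satisfies $\psi$ under $t_{\bar n}$. For this I also existentially quantify pointer functions $P_{\bar n}(i, j)$, the position on trace $\tr_j$ after $i$ trajectory steps, with the defining recursion $P(0,j)=0$ and $P(i+1,j)=P(i,j)+[\tr_j\in t(i)]$ stated first-order. I further quantify, for each subformula $\psi'$ of $\psi$, a truth set $S_{\psi'}$ of pairs $(\bar n,i)$ with the usual local constraints. These are: atoms read off from $T$ at position $P(i,j)$; Boolean closure; $\X$ shifts $i$ to $i+1$. For $\U$, I do not use fixpoint tricks but state the semantics directly with first-order quantification over $k$, which is allowed since $\U$ is first-order definable once the atoms are fixed. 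Finally, the formula requires $(\bar n,0)\in S_\psi$ for all $\bar n$.

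The main obstacle is justifying that this encoding is faithful, i.e.\ that existence of these objects matches satisfiability. That follows from Theorem~\ref{thm_modelsize} together with the Skolem argument, which is routine but must cover the case of duplicated indices.
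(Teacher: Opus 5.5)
Your upper bound follows the paper's argument and is correct. You pass to a countable model via Theorem~\ref{thm_modelsize}, existentially quantify type-1 encodings of the model, the Skolem functions, trajectories and truth labels for the subformulas of $\psi$, and check local consistency, including the direct first-order clause for $\U$, with first-order quantifiers only. Two of your choices are more careful than the paper's sketch.

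\begin{itemize}
\item You let the trajectory depend on the tuple $\bar n$ of universal indices. This is genuinely needed: $\E$ is the innermost quantifier, and different trace assignments can require incompatible trajectories. The paper's sketch, as written, quantifies a single trajectory~$t$ before the trace assignments. Since assignments are type-0 objects, your family $(t_{\bar n})_{\bar n}$ is still one type-1 object, so nothing is lost.
\item You track only reachable configurations. Starting from $(\pointerasg^\init,0)$, the pointer assignment at trajectory position~$i$ is always $\forward{\pointerasg^\init}{t}{0}{i}$. Hence your $P_{\bar n}$ and truth sets indexed by $(\bar n,i)$ suffice, whereas the paper's expansion ranges over all pointer assignments.
\end{itemize}

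One small fix is needed. A trajectory for $\psi$ consists of nonempty subsets of $\varsocc{\psi}$, so $j$ must range only over variables occurring in $\psi$. Otherwise a step containing only a non-occurring quantified variable would be an illegal stutter step.

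The lower bound is, as in the paper, by citation. Your first construction, $\gentquant_0\tr_0\cdots\gentquant_{k-1}\tr_{k-1}\E\bigwedge_j\G(\altprop_{\tr_j}\leftrightarrow\neg\X\altprop_{\tr_j})\wedge\psi$, is exactly the paper's. The guards you then add should be dropped, because the justification you give for them is wrong.

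\begin{itemize}
\item A universal guard $\G(\altprop_\tr\leftrightarrow\neg\X\altprop_\tr)\rightarrow\cdots$ is evaluated under an \emph{existentially} quantified trajectory. The satisfier can therefore choose a trajectory with $\tr\notin t(0)$, which is possible as soon as another variable occurs.
\item That choice falsifies the guard and makes the implication vacuously true, independently of $\psi$. Existential trajectory quantification lets the satisfier avoid the synchronous trajectory; it does not force it.
\item If the guards replace the unconditional conjuncts for universal variables, the reduction no longer preserves satisfiability. If they are added on top of them, they are redundant.
\end{itemize}

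No relativisation is needed in the first place. Suppose a set~$\setL'$ satisfies the translated sentence. By Remark~\ref{remark_trajaccess}, every witnessing trajectory contains every $\tr_j$ in every step, so $\psi$ holds under synchronous semantics for every assignment reached. Since $\phi$ does not mention $\altprop$, $\setL'$ itself is a \hyltl model of $\phi$, whatever non-alternating traces it contains.
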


\begin{proof}
We begin with the upper bound. 
Due to Theorem~\ref{thm_modelsize}, we can restrict ourselves to countable models, i.e., an \eahltl-sentence is satisfiable if and only if it is satisfied by a countable set of traces. 
In the following, we show how to express the existence of a countable model in arithmetic using only existential second-order quantification and arbitrary first-order quantification.

So, let us fix an \eahltl-sentence~$\phi$.
We assume w.l.o.g.\ that both the variables appearing in $\phi$ and the propositions appearing in $\phi$ are natural numbers.
More specifically, we can assume that $\phi$ uses the variables~$\set{0,1,\ldots, \ell-1}$ for some $\ell > 0$.
For technical convenience we will restrict ourselves in this proof to variable assignments with domain~$\set{0,1,\ldots, \ell-1}$, as this is sufficient to capture the semantics.

We begin by explaining how to encode the different objects we need to handle to capture the semantics of \ahltl using natural numbers and sets of natural numbers (equivalently, functions from $\nats^m$ to $\nats^n$ for some $n$ and $m$).
For example, we can \emph{name} the traces of a nonempty countable set of traces by natural numbers (if the set is only finite, the naming is not injective, which is inconsequential for our construction).
So, we can encode a nonempty countable set of traces by a function mapping natural numbers (i.e., trace names), positions on traces, and propositions to $\set{0,1}$ indicating whether a propositions holds at a position of a trace, i.e., such a set of traces is encoded by a function from $\nats^3$ to $\nats$, i.e., by a type~$1$ object.

Next, we consider trace and pointer assignments:
We can encode a trace assignment by a list of natural numbers of length~$\ell$ containing the names of the traces the variables are mapped to.
Similarly, we can encode a pointer assignment by a list of natural numbers of length~$\ell$ containing the positions the variables are mapped to.
Hence, both are type~$0$ objects, as finite lists of numbers can be encoded by numbers.
Using this encoding, we can encode a Skolem function for an existentially quantified variable by a function mapping natural numbers (encoding trace assignments) to natural numbers (encoding trace names), which is again a type~$1$ object.
Finally, a trajectory maps natural numbers to sets of variables.
We encode each such set by a finite list of natural numbers, which is again a type~$0$ object.
So, a trajectory is encoded by a type~$1$ object.
All these encodings can be \myquot{implemented} in first-order arithmetic.

Thus, we can intuitively use existential second-order quantification to express the existence of (the encoding of) a countable model, of (the encoding of) Skolem functions for the existentially quantified variables, and of (the encoding of) a trajectory.
Thus, it remains to capture that every trace assignment (a type~$0$ object) that is consistent with the Skolem functions satisfies, with respect to the trajectory, the maximal quantifier-free subformula~$\psi$ of $\phi$, when all pointers are zero.

To this end, let $\Psi$ be the set of subformulas of $\psi$, let $t$ be a trajectory, and let $\traceasg$ be a trace assignment. 
The $(\psi, \traceasg, t)$-expansion is the function~$e_{\psi, \traceasg, t}$ mapping a pointer assignment~$\pointerasg$, a position~$i \in \nats$ for the trajectory~$t$, and a subformula~$\psi' \in \Psi$ to
\[
e_{\psi, \traceasg, t}(\pointerasg, i,\psi') = 
\begin{cases}
    1 & \text{if } (\traceasg, \pointerasg, t,i) \models \psi', \\
    0 & \text{otherwise.}
\end{cases}
\]

The $(\psi, \traceasg, t)$-expansion is uniquely characterized by the following consistency requirements:
\begin{itemize}
    \item $e_{\psi, \traceasg, t}(\pointerasg, i,\prop_\tr) = \begin{cases}
        1 &\text{if } \traceasg(\tr) = \tra \text{ and } \p \in \tra(\pointerasg(\tr)),\\
        0 &\text{otherwise,}
    \end{cases}$
    \item $e_{\psi, \traceasg, t}(\pointerasg, i, \neg \psi')= 1$ if and only if $e_{\psi, \traceasg, t}(\pointerasg, i,\psi')= 0$,
    \item $e_{\psi, \traceasg, t}(\pointerasg, i,\psi_0 \vee \psi_1) = 1$ if and only if $e_{\psi, \traceasg, t}(\pointerasg, i,\psi_0) =1$ or $e_{\psi, \traceasg, t}(\pointerasg, i,\psi_1) = 1$,
    \item $e_{\psi, \traceasg, t}(\pointerasg, i,\X\psi') = 1$ if and only if $e_{\psi, \traceasg, t}(\forward{\pointerasg}{t}{i}{1}, i+1,\psi') = 1$, and
    \item $e_{\psi, \traceasg, t}(\pointerasg, i,\psi_0 \U \psi_1) = 1$ if and only if there is a $k\ge 0$ such that $e_{\psi, \traceasg, t}(\forward{\pointerasg}{t}{i}{k}, i+k,\psi_1) = 1$ and $e_{\psi, \traceasg, t}(\forward{\pointerasg}{t}{i}{k'}, i+k',\psi_0) = 1$ for all $0 \le k' < k$.
\end{itemize}

We argue that these consistency requirements can be captured in arithmetic. 
Then, we can existentially quantify a type~$1$ encoding the $(\psi, \traceasg, t)$-expansion for a given $\traceasg$ and $t$ and thus check whether $\traceasg$ and $t$ satisfy $\psi$.

To this end, we first argue that one can \myquot{implement} the (one-step) forward operator in arithmetic. 
We can write a formula~$\alpha_\osforwardabbr(T, i, a_p, a_p')$ with free variables~$T$ (second-order, encoding a trajectory~$t$), $i$ (first-order, interpreted as a position on $t$), and $a_p$ and $a_p'$ (first order, each encoding a pointer assignment~$\pointerasg$ and $\pointerasg'$) that is satisfied in $\natsstruct$ if and only if $\onestepforward{\pointerasg}{t}{i} = \pointerasg'$.
Then, we can write a formula~$\alpha_\forwardabbr(T, i, a_p, a_p', k)$ with free variables~$T$ (second-order, encoding a trajectory~$t$), $i$ (first-order, interpreted as a position on $t$), $a_p$ and $a_p'$ (first order, each encoding a pointer assignment~$\pointerasg$ and $\pointerasg'$), and $k$ (first-order) that is satisfied in $\natsstruct$ if and only if $\forward{\pointerasg}{t}{i}{k} = \pointerasg'$:
The formula expresses that there exists a list~$(a_0, a_1, \ldots, a_k)$ of length~$k+1$ such that $a_0= a_p$, $a_k = a_p'$, and $\alpha_\forwardabbr(T, i+n, a_n, a_{n+1})$ holds for all $0 \le n < k$.

Now, we can complete the proof: We can construct a formula~$\alpha(x)$ of arithmetic with a single free first-order variable~$x$ so that we have for all \eahltl sentences~$\phi$: $\natsstruct \models \alpha(\encode{\phi})$ if and only if $\phi$ is satisfiable. Here, $\encode{\phi}$ is some suitable encoding of \eahltl sentences by natural numbers, which can be implemented in first-order arithmetic.
The formula expresses, for a given~$\encode{\phi}$, that there is a countable model, there are Skolem functions for the existentially quantified variables in $\phi$, there is a trajectory~$t$, and there is a function~$E$ from $\nats^4 $ to $\nats$ such that for all trace assignments~$\traceasg$ (encoded as some~$a_t \in \nats$) that are consistent with the Skolem functions, the function~$a_p,i,p \mapsto E(a_t,a_p,i,p)$ is the $(\psi, \traceasg,t)$-expansion and we have $E(a_t, a_0, 0, p^*) = 1$.
Here, $a_p$ encodes a pointer assignment, $p$ encodes a quantifier-free subformula of $\phi$, and $p^*$ is the encoding of the maximal quantifier-free subformula of $\phi$.
We leave the tedious, but standard, details to the reader.

The matching lower bound follows from previous work by Bozzelli, Peron, and Sánchez who presented a satisfiability-preserving embedding of \hyltl in \eahltl~\cite[Theorem 3]{expressiveness}.
Thus, as \hyltl satisfiability is $\Sigma_1^1$-hard~\cite{hypercomplexity}, so is \eahltl satisfiability.

As we generalize the later, let us present it here for the sake of completeness: Let $\varphi = \gentquant_0 \tr_0 \ldots \gentquant_{k-1}\tr_{k-1}\psi$ be a \hyltl sentence with quantifier-free $\psi$ and let $\altprop$ be a proposition not appearing in $\phi$. Then, due to Remark~\ref{remark_trajaccess}, $\phi$ is satisfiable if and only if $\gentquant_0 \tr_0 \ldots \gentquant_{k-1}\tr_{k-1} \E \bigwedge_{j=0}^{k-1} \G(\altprop_{\tr_j} \leftrightarrow \neg \X\altprop_{\tr_j}) \wedge \psi$ is satisfiable, i.e., we existentially quantify a trajectory~$t(0)t(1)t(2)\cdots$ and then require that each $\tr_j$ is in each~$t(i)$, i.e., time on all traces quantified in $\phi$ progresses normally. 
Then, the semantics of the temporal operators in \ahltl and \hyltl coincide.
\end{proof}

Next, we consider satisfiability for formulas with universally quantified trajectories. 
A $\Sigma_1^1$ lower bound can again be obtained by embedding \hyltl in \aahltl.
However, recall that we existentially quantified a model, Skolem functions, the expansion, and a trajectory to obtain the matching $\Sigma_1^1$ upper bound for \eahltl. 
However, for \aahltl, the trajectory is universally quantified, i.e., we obtain a $\Sigma_2^1$ upper bound, as we existentially quantify a model, Skolem functions, and the expansion, and then universally quantify the trajectory. The full proof can be found in the appendix.

\begin{theorem}
\label{thm_Asat}  
\aahltl satisfiability is $\Sigma_1^1$-hard and in $\Sigma_2^1$.
\end{theorem}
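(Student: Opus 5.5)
The plan has two parts: a membership proof for $\Sigma_2^1$ and a hardness proof for $\Sigma_1^1$.

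\textbf{Upper bound.} Theorem~\ref{thm_modelsize} lets us restrict to countable models. We then reuse the encoding from the proof of Theorem~\ref{thm_Esat}:
\begin{itemize}
\item a countable set of traces becomes a function from $\nats^3$ to $\nats$;
\item Skolem functions become type~$1$ objects mapping (codes of) trace assignments to trace names;
\item trajectories become type~$1$ objects;
\item the formula $\alpha_\forwardabbr$ implements the forward operator.
\end{itemize}
The only change from the \eahltl case is the order of quantifiers. For a fixed trajectory and trace assignment, the expansion $e_{\psi,\traceasg,t}$ is uniquely determined by the consistency requirements. Its domain includes the trajectory, so it cannot be quantified before the universal trajectory quantifier.

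So the formula $\alpha(x)$ will read as follows. There exist a model, Skolem functions, and, for each trajectory, \ldots. Concretely:
\begin{enumerate}
\item Existentially quantify the model and the Skolem functions.
\item Universally quantify a set $T$, and require that if $T$ encodes a valid trajectory (every entry is a nonempty subset of $\varsocc{\psi}$, a first-order condition), then there exists $E$ with the following property: for every trace assignment consistent with the Skolem functions, $E(a_t,\cdot,\cdot,\cdot)$ satisfies the consistency requirements and $E(a_t,a_0,0,p^*)=1$.
\end{enumerate}
This has shape $\exists\forall\exists$, which is too many blocks. There are two ways to get $\Sigma_2^1$ shape.

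The first is to observe that, because the expansion is uniquely characterized, the inner $\exists E$ can be replaced by $\forall E$: \myquot{for all $E$, if $E$ satisfies the consistency requirements then $E(a_t,a_0,0,p^*)=1$}. This is equivalent because an expansion always exists. The result has shape $\exists\forall$, i.e., $\Sigma_2^1$.

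The second is to note that the existence of an $E$ satisfying arithmetic constraints that uniquely determine it is $\Delta_1^1$ relative to the parameters, so either quantifier can be used.

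I will use the first option, since it directly matches the definition of $\Sigma_2^1$ given in the preliminaries. The main point to check is that the consistency requirements really pin down $E$ uniquely. This needs care for $\U$: the requirement is stated as an explicit existence of a witness $k$ with the correct prefix condition rather than as a fixpoint, so an induction on subformulas shows uniqueness.

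\textbf{Lower bound.} Reduce \hyltl satisfiability, which is $\Sigma_1^1$-hard~\cite{hypercomplexity}, to \aahltl satisfiability, mirroring the \eahltl embedding. Given $\phi=\gentquant_0\tr_0\ldots\gentquant_{k-1}\tr_{k-1}\psi$, pick a fresh proposition $\altprop$ and output
\[\gentquant_0\tr_0\ldots\gentquant_{k-1}\tr_{k-1}\A\Big(\bigwedge_{j<k}\G(\altprop_{\tr_j}\leftrightarrow\neg\X\altprop_{\tr_j})\Big)\rightarrow\psi.\]
The intended effect: every trajectory that moves all traces synchronously must satisfy $\psi$, while every other trajectory is vacuously accepted. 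The trajectory ranges over subsets of the variables occurring in the matrix, which now include all $\tr_j$. The subtlety is that Remark~\ref{remark_trajaccess} only gives \myquot{iff} when $\altprop$ alternates on every trace. Two problems follow if it does not:
\begin{itemize}
\item On a trace where $\altprop$ does not alternate, the premise may fail even for the synchronous trajectory, making the implication vacuously true.
\item The premise may hold for a non-synchronous trajectory, and the two cases cannot be distinguished.
\end{itemize}
To fix this, strengthen the formula so models are forced to have $\altprop$ alternating on every trace. Conjoin (via Remark~\ref{remark_boolcombs}) a sentence $\forall\tr\,\A\,\G(\altprop_\tr\leftrightarrow\neg\X\altprop_\tr)$. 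With a single variable, every trajectory is the constant $\set{\tr}$, so this is ordinary \ltl alternation.

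For the correctness argument:
\begin{itemize}
\item Forward direction: a \hyltl model $\setL$ is extended by labelling $\altprop$ alternately on every trace.
\item Backward direction: alternation holds on all traces, so the premise holds exactly for the synchronous trajectory, under which \ahltl and \hyltl semantics coincide. Dropping $\altprop$ then yields a \hyltl model.
\end{itemize}

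I expect this alternation-forcing step, and verifying the exact \myquot{iff} of Remark~\ref{remark_trajaccess} under universal trajectories, to be the main delicate point.
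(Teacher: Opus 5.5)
Your upper bound is the paper's argument: existentially quantify the countable model and the Skolem functions, universally quantify the trajectory, and use uniqueness of the expansion to trade $\exists E$ for $\forall E$. For the lower bound, the paper uses exactly your displayed formula and appeals to Remark~\ref{remark_trajaccess} without forcing alternation. Your diagnosis that this is vacuous is correct: every nonempty $\altprop$-free set of traces satisfies it, because the premise is already false at position~$0$ under every trajectory.

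\textbf{The gap is in your repair.} It fails at exactly the step you delegate to Remark~\ref{remark_boolcombs}. After merging $\forall\tr\,\A\,\G(\altprop_\tr\leftrightarrow\neg\X\altprop_\tr)$ into the sentence, there is a single trajectory quantifier. It ranges over trajectories for the combined matrix, whose variables are $\tr,\tr_0,\ldots,\tr_{k-1}$. Your single-variable argument (every trajectory is the constant $\set{\tr}$) no longer applies. The trajectory with $t(i)=\set{\tr_0}$ for all $i$ never moves $\tr$, so $\altprop_\tr\leftrightarrow\neg\X\altprop_\tr$ is false already at position~$0$, whatever trace $\tr$ is assigned. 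Since trajectories are universally quantified, the merged sentence is unsatisfiable for every input, and even your forward direction breaks. Merging enlarges the set of trajectories and can freeze all variables of one conjunct at some step, so Remark~\ref{remark_boolcombs} cannot be used this way.

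\textbf{How to fix it.} You need an alternation-forcing conjunct that survives freezing. Inside the single $\A$, conjoin $\G\bigvee_{j<k}(\altprop_{\tr_j}\leftrightarrow\neg\X\altprop_{\tr_j})$.
\begin{itemize}
\item If all traces assigned to $\tr_0,\ldots,\tr_{k-1}$ alternate, then in every step some variable moves and its $\altprop$ flips, so the conjunct holds under every trajectory.
\item Conversely, suppose $\traceasg(\tr_j)$ agrees on $\altprop$ at positions $n$ and $n+1$. The trajectory with $t(i)=\set{\tr_j}$ for all $i$ then falsifies every disjunct at position~$n$, since all other pointers stay at~$0$.
\end{itemize}
Hence under every satisfying assignment all assigned traces alternate, and your guard holds exactly for the synchronous trajectory. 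Both directions of the reduction then go through as you describe: forward by labelling $\altprop$ alternately, backward by reusing the Skolem strategy on the same set of traces.
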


\begin{proof}
For the lower bound, we adapt the embedding of \hyltl into \eahltl~\cite{expressiveness} discussed above.
Thus, as \hyltl satisfiability is $\Sigma_1^1$-hard~\cite{hypercomplexity}, so is \aahltl satisfiability. 

Let $\varphi = \gentquant_0 \tr_0 \ldots \gentquant_{k-1}\tr_{k-1}\psi$ be a \hyltl sentence with quantifier-free $\psi$ and let $\altprop$ be a proposition not appearing in $\phi$. 
Due to Remark~\ref{remark_trajaccess}, $\phi$ is satisfiable if and only if $\gentquant_0 \tr_0 \ldots \gentquant_{k-1}\tr_{k-1} \A \left(\bigwedge_{j=0}^{k-1} \G(\altprop_{\tr_j} \leftrightarrow \neg\X\altprop_{\tr_j})\right) \rightarrow \psi$ is satisfiable, i.e., we universally quantify a trajectory~$t(0)t(1)t(2)\cdots$ and then require that if each $\tr_j$ is in each~$t(i)$, then $\psi$ has to hold. 
Then, the semantics of the temporal operators in \ahltl and \hyltl coincide.

For the upper bound, we reuse the encodings developed in the proof of Theorem~\ref{thm_Esat} and construct a formula~$\alpha(x)$ of arithmetic with a single free first-order variable~$x$ so that we have for all \aahltl sentences~$\phi$: $\natsstruct \models \alpha(\encode{\phi})$ if and only if $\phi$ is satisfiable. 
Intuitively, the formula should express, for a given~$\encode{\phi}$, that there is a countable model, there are Skolem functions for the existentially quantified variables in $\phi$, \emph{for all} trajectories~$t$, there is a function~$E$ from $\nats^4 $ to $\nats$ such that for all trace assignments~$\traceasg$ (encoded as some~$a_t \in \nats$) that are consistent with the Skolem functions, the function~$a_p,i,p \mapsto E(a_t,a_p,i,p)$ is the $(\psi, \traceasg,t)$-expansion and we have $E(a_t, a_0, 0, p^*) = 1$.
Here, $a_p$ encodes again a pointer assignment, $p$ encodes a quantifier-free subformula of $\phi$, and $p^*$ is the encoding of the maximal quantifier-free subformula of $\phi$.

However, this naive formula involves two  (second-order) quantifier alternations, i.e., it only shows membership in $\Sigma_3^1$.
To improve this, recall that the expansion is unique for a given trajectory and trace assignment. 
Hence, we can alternatively quantify $E$ universally and then require that the function~$a_p,i,p \mapsto E(a_t,a_p,i,p)$ is the $(\psi, \traceasg,t)$-expansion then we must have $E(a_t, a_0, 0, p^*) = 1$.
The resulting formula has only one (second-order) quantifier-alternation, which shows that \aahltl satisfiability is indeed in $\Sigma_2^1$.
\end{proof}

\section{Model-Checking}
\label{sec_mc}

In this section, we study the model-checking problem, which turns out to be harder than satisfiability. 

\begin{theorem}
\label{thm_mc}    
Model-checking \ahltl is equivalent to truth in second-order arithmetic. The lower bound holds for \aahltl and \eahltl.
\end{theorem}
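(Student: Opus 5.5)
For the upper bound, I will reduce model-checking to truth in second-order arithmetic. Given $\TS$ and $\phi = \genquant_0 \tr_0 \cdots \genquant_{k-1}\tr_{k-1}\gentquant\psi$, I will write a sentence of second-order arithmetic that mirrors the quantifier prefix of $\phi$. A trace of $\TS$ is a type~$1$ object: I encode it by a run, i.e., a function $\nats \to V$ together with a first-order check that it starts in $I$ and respects $E$. Each trace quantifier $\exists\tr$ or $\forall\tr$ becomes a second-order quantifier over such runs, relativized to run-encodings. The trajectory quantifier $\E$ or $\A$ then becomes a second-order quantifier over trajectory encodings, with the encoding reused from the proof of Theorem~\ref{thm_Esat}. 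Finally I evaluate $\psi$ via the $(\psi,\traceasg,t)$-expansion. I will existentially quantify a function $E$ and require that it satisfies the consistency requirements and that $E(a_0,0,p^*)=1$. Since the expansion is unique, the universal form works equally well, so no extra alternation issues arise. Trace assignments now consist of second-order objects, but the expansion only needs the propositions $\labfunc(\rho_\tr(\pointerasg(\tr)))$, which are first-order definable from the quantified runs. The resulting sentence is true iff $\TS\models\phi$, and it is computable from $(\TS,\phi)$.

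For the lower bound, I will reduce truth in second-order arithmetic to model-checking, first for \aahltl and then for \eahltl. The standard approach, as for \hyqptlplus and \ghltl, uses a fixed transition system whose traces encode all subsets of $\nats$: for instance, all traces over $\set{\prop}$ together with auxiliary propositions. Set quantifiers then become trace quantifiers. The hard part is that first-order arithmetic has number quantifiers and the relations $+$ and $\cdot$, which \hyltl cannot express on top of set quantification in the synchronous setting, and this is exactly where the trajectory must help.

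My plan is to encode each natural number $n$ as a trace with a single marker $\intprop$ at position $n$, obtained by quantifying a trace and checking the singleton property with $\A$ and the alternation trick of Remark~\ref{remark_trajaccess}. Addition and multiplication then have to be captured by comparing positions on different traces at asynchronous rates. For example, $x+y=z$ can be checked with a trajectory that first advances the traces for $x$ and $z$ together and then the traces for $y$ and $z$ together. To avoid relying on multiplication being expressible directly, I would first reduce to a normal form. One option is to use the fact that second-order arithmetic with only $<$ and $\in$, i.e., over $(\nats,<,\in)$ with set quantification, is already equivalent to full second-order arithmetic, since $+$ and $\cdot$ are definable by second-order formulas over successor. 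In that setting I only need to express set membership (synchronous) and order or successor between number-traces, which the embedding of \hyltl with the $\G(\altprop_\tr\leftrightarrow\neg\X\altprop_\tr)$ guards already handles. This may even make the lower bound essentially synchronous.

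I expect the main obstacle to be the following. The trace quantifiers of \ahltl all occur before a single trajectory quantifier, and the guard construction must be arranged so that, for $\A$, all trajectories violating the guard are vacuously accepted, while for $\E$ the synchronous trajectory is chosen. The embeddings in Theorems~\ref{thm_Esat} and~\ref{thm_Asat} already exhibit both shapes, so the lower bound transfers to both fragments once the fixed transition system and the guard translation are in place. The real difficulty then lies in choosing the arithmetic normal form so that the matrix is an \ltl-checkable condition on the encoding traces.
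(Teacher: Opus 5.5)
\paragraph{Upper bound.} Your upper bound is correct and a legitimate variant of the paper's. The paper builds a compositional translation $\arith$ with set-encoded trace and pointer assignments and auxiliary formulas such as $\alpha_{trc}$ and $\alpha_{fw}$. You instead fix the quantifier prefix, quantify runs of $\TS$ as type-1 objects (harmless, since satisfaction depends only on the induced traces), and evaluate the matrix through the expansion from Theorem~\ref{thm_Esat}. This works because pointer assignments over $\varsocc{\psi}$ are first-order objects and the expansion is unique.

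\paragraph{Lower bound.} Here there is a genuine gap. Your proposed normal form is false. With only $<$, $\in$ and quantification over \emph{sets}, you are in monadic second-order logic over $(\nats,<)$, whose theory is decidable by B\"uchi's theorem. Addition and multiplication are definable from successor only in polyadic second-order logic, i.e., with quantification over relations, and a trace encodes only a set. For the same reason, no ``essentially synchronous'' reduction can exist. If every atom is checked under the synchronous trajectory selected by the guards $\G(\altprop_\tr\leftrightarrow\neg\X\altprop_\tr)$, the resulting sentence is equivalent on $\TS$ to a \hyltl sentence. But \hyltl model-checking is decidable (\tower-complete, see Table~\ref{table_knownresults}).

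So the trajectory must be used genuinely asynchronously, as you suspected one sentence earlier. Your addition gadget (advance the traces for $x$ and $z$ together until the marker on $x$, then those for $y$ and $z$) is the right idea. It is essentially the paper's, which freezes the trace of $\nu_{j''}$ until $\intprop$ occurs on the trace of $\nu_{j'}$.

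For multiplication you have two options:
\begin{itemize}
\item A direct gadget, as in the paper. It uses two existentially quantified periodic auxiliary traces with periods $n''$ and $n''-1$, delays the second by $n'$, and detects the first position where both pointers sit at block ends. This exploits that $z=n'$ is the least positive solution of $z(n''-1)=z'n''-n'$.
\item Eliminating $\cdot$ in favour of $+$ by set quantification, since monadic second-order logic over $(\nats,+)$ is already as expressive as second-order arithmetic. For example, for $x>0$ we have $z=x^2$ iff some set containing $0$ and $1$, whose consecutive gaps grow by $2$, contains $z$, and the predecessor $z'$ of $z$ in that set satisfies $z'+2x=z+1$. Then use $2xy=(x+y)^2-x^2-y^2$.
\end{itemize}

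Two further points are not settled by the single-guard embeddings of Theorems~\ref{thm_Esat} and~\ref{thm_Asat}.

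First, once different atoms need different trajectories under one $\A$, the guarded atoms $g_a\rightarrow c_a$ do not commute with $\neg$ and $\vee$. For instance, $\A\,\neg(g\rightarrow c)$ fails as soon as some trajectory violates $g$. You therefore need a device such as existentially quantified truth-value traces $b_a$, together with the conjunction $\bigwedge_a\bigl(g_a\rightarrow(\intprop_{b_a}\leftrightarrow c_a)\bigr)$ and the Boolean matrix evaluated on the $\intprop_{b_a}$ at position $0$.

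Second, for \eahltl a single trajectory cannot realize several incompatible freezing patterns, so ``choosing the synchronous trajectory'' does not suffice. Instead use, as the paper does, that second-order arithmetic is closed under negation: translate $\neg\phi$ into \aahltl and negate the result, which dualizes all quantifiers and yields an \eahltl sentence.
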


\begin{proof}
For the upper bound, we present a polynomial-time translation mapping pairs~$(\TS, \phi)$ of transition systems and \ahltl sentences to sentences~$\phi'$ of second-order arithmetic such that $\TS\models \phi$ if and only if $\natsstruct\models\phi'$.
To this end, we capture the semantics of \ahltl in second-order arithmetic.
At first glance, this approach bears resemblance to the upper bound proofs for \ahltl satisfiability (Theorem~\ref{thm_Esat} and Theorem~\ref{thm_Asat}).
However, there we had to \emph{only} handle countable sets of traces. 
Here, as $\traces(\TS)$ may be uncountable, we have to generalize the techniques to be able to quantify traces, trace assignments, pointer assignments, and trajectories. We begin by introducing their encodings.

So, let us fix $\TS$ and $\phi$ as above.
To encode traces of $\tsys$, we fix a bijection~$h_\propos\colon \ap\rightarrow \nats$ and use Cantor's pairing function~$\pair \colon \nats\times\nats\rightarrow\nats$ defined as $\pair(i,j) = \frac{1}{2}(i+j)(i+j+1)+j$, which is a bijection that can be implemented in first-order arithmetic.

Then, we encode a trace~$\tra = \tra(0)\tra(1)\tra(2)\cdots$ over $\ap$ by the set
\[
S_\tra = \set{ \pair(i, h_\propos(\prop)) \mid i \in\nats \text{ and } \prop \in \tra(i) } \subseteq
\nats.\]
Note that $\tra\mapsto S_\tra$ is injective, but not every set encodes a trace of $\TS$. 
But there is a formula~$\alpha_{trc}(X)$ of second-order arithmetic with a single free second-order variable~$X$ such that $\natsstruct\models\alpha_{trc}(X)$ if and only if $X$ encodes a trace of $\TS$~\cite[Proof of Theorem~11]{frz26}, which relies on the fact that each transition system uses only finitely many propositions.

To encode trace and pointer assignments, we fix a bijection~$h_\varis \colon \vars\rightarrow \nats$. 
Then, we encode a trace assignment~$\traceasg$ by the set
\[
S_{\traceasg} = \set{ \pair(h_\varis(\tr), n) \mid \tr \in \dom{\traceasg} \text{ and } n \in S_{\traceasg(\tr)} } \subseteq \nats.
\]
Similarly, we encode a pointer assignment~$\pointerasg$ by the set
\[
S_{\pointerasg} = \set{\pair(h_\varis(\tr),\pointerasg(\tr)) \mid \tr \in \dom{\pointerasg}} \subseteq \nats.
\]
Again, the functions~$\traceasg\mapsto S_{\traceasg}$ and $\pointerasg \mapsto S_{\pointerasg}$ are injective, but not every set encodes a trace (pointer) assignment.
However, one can write second-order formulas~$\alpha_{ta}(X)$ and $\alpha_{pa}(X)$ that hold in $\natsstruct$ if and only if $X$ encodes a trace assignment (mapping to traces of $\TS)$ respectively a pointer assignment. The first formula relies on $\alpha_{trc}$.
Furthermore, one can write a formula~$\alpha_{empty}(X)$ with a single free second-order variable that is satisfied in $\natsstruct$ if and only $X$ encodes the trace assignment with empty domain.
Also, there is a formula~$\alpha_{init}(X)$ with a single free second-order variable that holds in $\natsstruct$ if and only if $X$ encodes the initial pointer assignment mapping every $\tr\in\vars$ to zero.

Finally, we encode a trajectory~$t = t(0)t(1)t(2)\cdots$ for the maximal quantifier-free subformula~$\psi$ of $\phi$ (i.e., each $t(i)$ is a subset of the variables occurring in $\psi$) by the set
\[
S_t = \set{ \pair(i, h_\varis(\tr)) \mid i \in\nats \text{ and } \tr\in t(i) } \subseteq \nats.
\]
Once more, the function~$t \mapsto S_t$ is injective, not every set encodes a trajectory for $\psi$, but there is a second-order formula~$\alpha_{trj}$ that holds in $\natsstruct$ if and only if $X$ encodes a such a trajectory. 

With these encodings and auxiliary formulas, we can construct the following formulas:
\begin{itemize}
    \item $\alpha_{lo}(A_t,A_p,p,j)$ with free second-order variables~$A_t$ (encoding a trace assignment~$\traceasg$) and $A_p$ (encoding a pointer assignment~$\pointerasg$) and free first-order variables~$p$ (encoding the proposition~$\prop = h_\propos^{-1}(p)$) and $j$ (encoding the variable~$\tr = h_\varis^{-1}(j)$) such that $\natsstruct\models\alpha_{lo}(A_t,A_p,p,j)$ if and only if $\prop \in \tra(\pointerasg(\tr))$ for $\tra =  \traceasg(\tr)$, i.e., $\alpha_{lo}$ looks up whether $\prop$ holds on the trace assigned to $\tr$ at the position induced by $\pointerasg$. This captures exactly the semantics of atomic propositions.

    \item $\alpha_{up}(A_t, A_t', X, j)$ with free second-order variables~$A_t$, $A_t'$ (encoding trace assignments~$\traceasg$ and $\traceasg'$), and $X$ (encoding a trace~$\sigma$ of $\TS$), and a free  first-order variable~$i$ (encoding the variable~$\tr = h_\varis^{-1}(j)$) such that $\natsstruct \models \alpha_{up}(A_t, A_t', X, j)$ if and only if $\traceasg' = \traceasg[\tr \mapsto \sigma]$, i.e., $\alpha_{up}$ implements the update of trace assignments.

    \item $\alpha_{fw}(A_p, A_p',T,i,k)$ with free second-order variables~$A_p$, $A_p'$ (encoding pointer assignments~$\pointerasg$ and $\pointerasg'$), and $T$ (encoding a trajectory~$t$), and free first-order variables~$i$ (interpreted as a position on $t$) and $k$ such that $\natsstruct \models\alpha_{fw}(A_p, A_p',T,i,k)$ if and only if $\pointerasg' = \forward{\pointerasg}{t}{i}{k}$, i.e., $\alpha_{fw}$ implements the iterated forward operator on pointer assignments.

\end{itemize}

Now, we can translate \ahltl into second-order arithmetic. The translation depends on the fixed transition system~$\TS$ whose traces the quantifiers range over. 
Furthermore, each formula of second-order arithmetic constructed by the translation has at most three free second-order variables (encoding a trace assignment, a pointer assignment, and a trajectory) and at most one free first-order variable (encoding a position on the trajectory).
\begin{itemize}
    \item $\arith(\exists\tr.\psi') = \exists X \alpha_{trc}(X) \wedge \exists A_t' (\alpha_{up}(A_t,A_t', X, h_\varis(\tr)) \wedge \arith(\psi'))$, where $A_t'$ is the free variable of $\arith(\psi')$ encoding the trace assignment and $A_t$ is the free variable of $\arith(\exists\tr  \psi')$ encoding the trace assignment. The other free variables are the same for both $\arith(\exists\tr  \psi')$ and $\arith(\psi')$.
    
    \item $\arith(\forall\tr  \psi') = \forall X  \alpha_{trc}(X) \rightarrow \exists A_t'  (\alpha_{up}(A_t,A_t', X, h_\varis(\tr)) \wedge \arith(\psi'))$ with the same setup of free variables as in the previous case.

    \item $\arith(\E \psi') = \exists T  \exists A_p  \exists i \ \alpha_{trj}(T) \wedge \alpha_{init}(A_p) \wedge i=0 \wedge \arith(\psi')$ where $T$ is the free variable of $\arith(\psi')$ encoding the trajectory, $A_p$ is the free variable of $\arith(\psi')$ encoding the pointer assignment, and $i$ is the free variable of $\arith(\psi')$ encoding the position on the trajectory. Hence, $\arith(\E \psi')$ does not have these free variables. The other free variable encoding the trace assignment is the same for both $\arith(\exists\tr  \psi')$ and $\arith(\psi')$.
    
    \item $\arith(\A \psi') = \forall T  \exists A_p  \exists i\  \alpha_{trj}(T) \rightarrow (\alpha_{init}(A_p) \wedge i=0 \wedge \arith(\psi'))$ with the same setup of free variables as in the previous case.

    \item $\arith(\prop_\tr) = \alpha_{lo}(A_t, A_p, h_\propos(\prop),h_\varis(\tr))$, i.e., $A_t$ and $A_p$ are the free variables for the assignments.

    \item $\arith(\neg \psi') = \neg \arith(\psi')$. Here, the free variables of $\arith(\neg \psi')$ are the free variables of $\arith( \psi')$.

    \item $\arith(\psi'_1 \vee \psi'_2) = \arith(\psi'_1) \vee \arith(\psi'_2)$ where we assume w.l.o.g.\ that the free variables of $\arith(\psi'_1)$ are equal to the free variables of $ \arith(\psi'_2)$, which are then also the free variables of $\arith(\psi'_1 \vee \psi'_2)$.

    \item $\arith(\X\psi') = \exists i'   \exists A_p'\  i' = i+1 \wedge \alpha_{fw}(A_p, A_p',T,i,1)\wedge \arith(\psi') $ where $A_p'$ is the free variable of $\arith(\psi')$ encoding the pointer assignment and $A_p$ is the free variable of $\arith(\X\psi')$ encoding the pointer assignment, $i'$ is the free variable of $\arith(\psi')$ encoding the position and $i$ is the free variable of $\arith(\X\psi')$ encoding the position, and both formulas share the same free variables encoding the trace assignment and the trajectory (which is $T$).

    \item $\arith(\psi'_1 \U \psi'_2) = \exists k  \exists A_p'\  i' = i+k \wedge \alpha_{fw}(A_p, A_p', T, i, k) \wedge \arith(\psi'_2) \wedge \forall k'\  k'<k \rightarrow \exists A_p''\  i'' = i+k' \wedge  \alpha_{fw}(A_p, A_p'', T, i, k') \wedge \arith(\psi'_1)$ where $A_p'$ ($A_p''$) is the free variable of $\arith(\psi'_1)$ (of $\arith(\psi'_2)$) encoding the pointer assignment and $A_p$ is the free variable of $\arith(\psi'_1 \U\psi'_2)$ encoding the pointer assignment, $i'$ ($i''$) is the free variable of $\arith(\psi'_1)$ (of $\arith(\psi'_2)$) encoding the position and $i$ is the free variable of $\arith(\psi'_1\U\psi'_2)$ encoding the position, and both formulas share the same free variables encoding the trace assignment and the trajectory (which is $T$).
    
\end{itemize}
Now, we define $\phi' = \exists A_p  \alpha_{empty}(A_p) \wedge \arith(\phi)$, where $A_p$ is the free variable of $\arith(\phi)$ encoding the trace assignment. 
Then, we indeed have $\natsstruct \models \phi'$ if and only if $\TS \models \phi$ as required, which can be shown by an induction over the construction of $\phi$.


For the lower bound, we first consider universal trajectory quantification: We present a polynomial-time translation mapping sentences~$\phi$ of second-order arithmetic to pairs~$(\TS, \phi')$ of transition systems~$\TS$ and \aahltl sentences~$\phi'$ such that $\natsstruct \models \phi$ if and only if $\TS \models \phi'$.
Intuitively, we will capture the semantics of arithmetic in \aahltl.

We begin by formalizing our encoding of natural numbers and sets of natural numbers using traces. 
Intuitively, a trace~$\tra$ over a set~$\ap$ of propositions containing the proposition~$\intprop$ encodes the set~$\set{n \in \nats \mid \intprop \in \tra(n)} \subseteq \nats$.
Thus, a trace~$\tra$ encodes a singleton set if it satisfies the \ltl formula~$(\neg \intprop) \U (\intprop \wedge \X\G\neg \intprop)$.
In the following, we use the encoding of singleton sets to encode natural numbers as well.
Obviously, every set and every natural number is encoded by a trace in that manner.
Thus, we mimic first- and second-order quantification by quantification over traces.
Later, we use an additional proposition~$\auxprop$ to implement multiplication.
Furthermore, we use a proposition~$\altprop$ whose truth value will alternate on every trace. 
This is used to apply Remark~\ref{remark_trajaccess} to \myquot{access} the trajectory.

Fix a transition system~$\TS$ whose set of traces is 
\begin{align*}
    \traces(\TS) = & \set{ \tra(0)\tra(1)\tra(2) \cdots \in (\pow{\set{\intprop, \altprop}})^\omega \mid \altprop \in \tra(i) \leftrightarrow \altprop \notin \tra(i+1) \text{ for all } i\in\nats } \cup\\
    &\set{ \tra(0)\tra(1)\tra(2) \cdots \in (\pow{\set{\auxprop, \altprop}})^\omega \mid \altprop \in \tra(i) \leftrightarrow \altprop \notin \tra(i+1) \text{ for all } i\in\nats }.
\end{align*}
We call traces in the first set 
\myquot{set traces} and traces in the second set 
\myquot{auxiliary traces}.
Note that $(\emptyset\set{\altprop})^\omega$ and $(\set{\altprop}\emptyset)^\omega$ are both set and auxiliary traces. This is inconsequential for our construction.

Let $\alpha_{alt}(\tr) = \G(\altprop_\tr \leftrightarrow \X\neg\altprop_\tr)$ and let $\traceasg$ be a trace assignment with $\traceasg(\tr) \in \traces(\TS)$. Then, we have $(\traces(\TS),\traceasg) \models \A \alpha_{alt}(\tr) \rightarrow \G\neg\auxprop_\tr$ if and only if $\traceasg(\tr)$ is a set trace. Similarly, we have $(\traces(\TS),\traceasg) \models \A \alpha_{alt}(\tr) \rightarrow \G\neg\intprop_\tr$ if and only if $\traceasg(\tr)$ is an auxiliary trace.

We can now begin with our translation from second-order arithmetic to \aahltl. To this end, we assume w.l.o.g.\ that $\phi$ is a sentence of second-order arithmetic in prenex normal form, say
\[
\phi = \genquant_0 \nu_0\genquant_1 \nu_1\cdots \genquant_{k-1} \nu_{k-1}\phi'
\]
where each $\genquant_j$ is in $\set{\exists,\forall}$, each $\nu_j$ is either a first- or second-order variable, and $\phi'$ is quantifier-free. 
For each such variable~$\nu_j$, we introduce a trace variable~$\tr_{j}$ and define
\[
\phi' = \genquant_0 \tr_{0}\genquant_1 \tr_{1}\cdots \genquant_{k-1} \tr_{{k-1}} \exists \tr_0^a \cdots \exists \tr_{k'-1}^a  \A \alpha_e \wedge (\alpha_a \rightarrow  \hyperize(\phi'))
\]
where $\alpha_e $, $\alpha_e $ and $\hyperize(\phi')$ are defined below and where $\tr_0^a,\ldots, \tr_{k'-1}^a$ is a collection of fresh trace variables, four for each atomic formula of the form~$y = y' \cdot y''$ in $\phi$, that we will use to implement multiplication in \aahltl.
In the following, all our explanations assume that the trajectory is universally quantified, as it is the case in $\phi'$.
To define $\alpha_e $, $\alpha_e $ and $\hyperize(\phi')$, we rely on the universal quantification of the trajectory, as it allows us to \myquot{select}, using guard formulas, which trajectory we use to mimic, e.g., the semantics of quantifier-free second-order arithmetic formulas.
We define 
\[\alpha_e = 
\bigwedge_j \alpha_{alt}(\tr_{j}^a) \rightarrow (\G\neg\intprop_{\tr_j^a}) \wedge 
\bigwedge_{j'}\alpha_{alt}(\tr_{j'}) \rightarrow(\G\neg \auxprop_{\tr_{j'}}) \wedge \bigwedge_{j''} \alpha_{alt}(\tr_{j''}) \rightarrow (\neg \intprop_{\tr_{j''}}) \U (\intprop_{\tr_{j''}} \wedge \X\G\neg \intprop_{\tr_{j''}})  \]
where $j$ ranges over $\set{1,2,\ldots, k'-1}$, $j'$ ranges over all indexes such that $\genquant_{j'} = \exists$ and $j''$ ranges over all indexes such that $\genquant_{j''} = \exists$ and $\nu_{j''}$ is a first-order variable, i.e., $\alpha_e$ requires that the auxiliary variables~$\tr_j^a$ are assigned to auxiliary traces, and that traces assigned to existentially quantified trace variables~$\tr_j$ copied from $\phi$ correctly mimic their original in $\phi$. 
Now, $\alpha_a$ is defined analogously, we just let $j'$ and $j''$ range over universally quantified variables and universally quantified first-order variables, respectively (and, for completeness, $j$ ranges over the empty set). In the following, we assume that $\alpha_e$ and $\alpha_a$ are satisfied.

Finally, $\hyperize$ is defined inductively as follows:
\begin{itemize}
        
    \item $\hyperize(\neg \psi) = \neg \hyperize(\psi)$.
    
    \item $\hyperize(\psi_1 \vee \psi_2) = \hyperize(\psi_1) \vee \hyperize(\psi_2)$.

    \item $\hyperize(\nu_j \in \nu_{j'}) = (\alpha_{alt}(\tr_{j}) \wedge \alpha_{alt}(\tr_{j'})) \rightarrow  \F(\intprop_{\tr_{j}} \wedge \intprop_{\tr_{j'}})$.

    \item $\hyperize(\nu_j < \nu_{j'}) = (\alpha_{alt}(\tr_{j}) \wedge \alpha_{alt}(\tr_{j'})) \rightarrow  \F( \intprop_{\tr_{j}} \wedge \X\F \intprop_{\tr_{j'}})$.

\end{itemize}

At this point, it remains to consider addition and multiplication.
Let $\traceasg$ be a trace assignment that maps some trace variables~$\tr_{j},\tr_{j'},\tr_{j''}$ to set traces~$\tra,\tra',\tra''$ encoding encoding singleton sets~$\set{n}, \set{n'}, \set{n''}$, respectively.
Our goal is to write quantifier-free formulas~$\hyperize(\nu_j = \nu_{j'} + \nu_{j''})$ and $\hyperize(\nu_j = \nu_{j'} \cdot \nu_{j''})$ with free variables~$\tr_{j}, \tr_{j'}, \tr_{j''}$ such that $\traceasg \models \A \hyperize(\nu_j = \nu_{j'} + \nu_{j''})$ if and only if $n = n' + n''$ 
and such that $\traceasg \models \A \hyperize(\nu_j = \nu_{j'} + \nu_{j''})$ if and only if
$n = n' \cdot n''$ (note satisfaction of these formulas only depends on a trace assignment, not a set of traces, so we drop it).

For addition, we define
\begin{align*}
\hyperize(\nu = \nu_{j'} + \nu_{j''}) = 
\left[\left(
(
\altprop_{\tr_{j''}} \leftrightarrow \X\altprop_{\tr_{j''}}) \U (\intprop_{\tr_{j'}} \wedge \alpha_{alt}(\tr_{j''})) \right)\wedge 
\alpha_{alt}(\tr_{j'}) \wedge \alpha_{alt}(\tr_{j})
\right]
\rightarrow \F(\intprop_{\tr_{j''}} \wedge \intprop_{\tr_{j}}).
\end{align*}
Intuitively, if time is frozen on the trace assigned to $\tr_{j''}$ until $\intprop$ is encountered on the trace assigned to $\tr_{j'}$ (and progresses normally otherwise), then $n$ must satisfy $n = n' + n''$ as required.

To conclude, we consider multiplication, which is more involved than addition.
Our construction here is inspired by a similar construction developed for \hltlc, but needs to replace contexts by trajectories.

We consider four different cases.
If $n' = 0$ or $n''=0$, then we must have $n = 0$ as well. This is captured by the formula~$\psi_1 = (\intprop_{\tr_{j'}} \vee \intprop_{\tr_{j''}}) \wedge \intprop_{\tr_{j}}$.
    Further, if $n' = n'' = 1$, then we must have $n = 1$ as well. This is captured by the formula~\[
    \psi_2 = (\alpha_{alt}(\tr_{j}) \wedge \alpha_{alt}(\tr_{j'}) \wedge \alpha_{alt}(\tr_{j''})) \rightarrow \X (\intprop_{\tr_{j'}} \wedge \intprop_{\tr_{j''}} \wedge \intprop_{\tr_{j}}).\]

Next, let us consider the case~$0 < n' \le n''$ with $n'' \ge 2$.
    Let $z \in \nats\setminus\set{0}$ be minimal with 
    \begin{equation}
    \label{eq}
        z \cdot (n'' - 1) = z' \cdot n'' - n'
    \end{equation}
    for some $z' \in \nats\setminus\set{0}$.
    It is easy to check that $z = n'$ is a solution of Equation~(\ref{eq}) for $z' = n'$.
    Now, consider some $0 < z < n'$ to prove that $n'$ is the minimal solution: 
    Rearranging Equation~\ref{eq} yields
    $    - z + n' = (z' - z)\cdot n''
    $, 
    i.e., $-z + n'$ must be a multiple of $n''$ (possibly $0$).
    But $0 < z < n'$ implies $0 < n' - z < n' \le n''$, i.e., $-z + n$ is not a multiple of $n''$. Hence, $z = n'$ is indeed the smallest solution of Equation~(\ref{eq}).  
    So, for the minimal such $z$ we have $z \cdot (n'' -1) + n' = n' \cdot n''$, i.e., we have expressed multiplication of $n'$ and $n''$.
Our goal is to implement this reasoning in \aahltl. 

To this end, let $\tr^a$ and ${\tr^a}'$ be two of the fresh trace variables quantified in $\phi'$ for the purpose of implementing multiplication.
The formula
\[
(\alpha_{alt}(\tr^a) \wedge \alpha_{alt}({\tr^a}')) \rightarrow \left[ \G(\auxprop_{\tr^a} \leftrightarrow \auxprop_{{\tr^a}'}) \wedge \G\F\auxprop_{\tr^a} \wedge \G\F\neg \auxprop_{\tr^a} \wedge \auxprop_{\tr^a} \right]
\]
expresses that the traces assigned to $\tr^a$ and ${\tr^a}'$ are, after projecting away $\altprop$, both of the form
\[
\set{\auxprop}^{m_0}\emptyset^{m_1}\set{\auxprop}^{m_2}\emptyset^{m_3}\set{\auxprop}^{m_4}\emptyset^{m_5}\cdots 
\]
with $m_j > 0$ for all $j$.
Then, the formula
\[\left[
 \alpha_{alt}(\tr^a) \wedge 
(\auxprop_{{\tr^a}} \wedge (\altprop_{{\tr^a}'} \leftrightarrow \X \altprop_{{\tr^a}'}))\U (\neg \auxprop_{{\tr^a}} \wedge \alpha_{alt}({\tr^a}')) 
\right]
\rightarrow 
\auxprop_{{\tr^a}}\U (\neg \auxprop_{{\tr^a}} \wedge \G(\auxprop_{\tr^a} \leftrightarrow \neg \auxprop_{{\tr^a}'}))
\]
expresses that all $m_j$ are equal: The antecedent of the implication requires that time progresses normally on $\tr^a$, but is frozen on ${\tr^a}'$ until $\auxprop$ does not hold for the first time in the trace assigned to ${\tr^a}$. From there onward, time also progresses normally on ${\tr^a}'$.
In this situation, the until in the consequent then updates the pointer of ${\tr^a}$ to the last position of the first~$\set{\auxprop}$-block in the trace assigned to ${\tr^a}'$ and the pointer of ${\tr^a}'$ to zero. These positions are marked by \myquot{$1$} in Figure~\ref{fig_per} which illustrates the construction. From these positions onward, the always compares the pairs of positions connected by the diagonal arrows, thereby ensuring that the blocks all have the same length.

  \begin{figure}
    \centering
        \begin{tikzpicture}[thick]

        \node at (-.5,1) {$\tr^a$};
        \node at (-.5,0) {${\tr^a}'$};
        \def\y{.4}
        \node[fill=gray!20, circle, minimum size =12,inner sep = 0] at (3,1+\y) {1};
        \node[fill=gray!20, circle, minimum size =12,inner sep = 0] at (0,-\y) {1};
        
        \draw[->, > = stealth] (0,1) -- (12,1);
        \draw[->, > = stealth] (0,0) -- (12,0);

        \foreach \i in {0,1,...,11}{
    \draw (\i,-.1) -- (\i, .1);
    \draw (\i,.9) -- (\i, 1.1);
  }

        \foreach \i in {0,1,2,6,7,8}{
\draw[fill,red] (\i,.1) circle (.03);
\draw[fill,red] (\i,-.1) circle (.03);

\draw[fill,red] (\i,.9) circle (.03);
\draw[fill,red] (\i,1.1) circle (.03);
}

\begin{scope}
    \clip(0,0) rectangle (12,1);
  \foreach \i in {0,1,...,11}{
    \draw[<->,> = stealth,thin] (\i, .15) -- (\i+3,.85);
  }
\end{scope}

        \end{tikzpicture}
        \caption{The formula ensuring that the traces assigned to ${\tr^a}$ (and ${\tr^a}'$) are periodic. Here, \myquot{\,\raisebox{-0.25ex}{\begin{tikzpicture}[thick]
\protect\draw(0,-.1) -- (0, .1);
            \protect\draw[fill,red] (0,.1) circle (.03);
\protect\draw[fill,red] (0,-.1) circle (.03);
        \end{tikzpicture}}\,} (\myquot{\,\raisebox{-0.25ex}{\begin{tikzpicture}[thick]
\protect\draw(0,-.1) -- (0, .1);
        \end{tikzpicture}}\,}) denotes a position where $\auxprop$ holds (does not hold).}
        \label{fig_per}
        
    \end{figure}

Let $\alpha_{per}(\tr^a,{\tr^a}')$ be the conjunction of the two formulas enforcing that the traces assigned to ${\tr^a}$ and ${\tr^a}'$ are periodic.
Furthermore, let $\tr^a_{j'}, \tr^a_{j'+1}, \tr^a_{j'+2}, \tr^a_{j'+3} $ be the auxiliary variables designated to implement the multiplication~$\nu_j = \nu_{j'} + \nu_{j''}$ in $\phi$.
We use the formulas~$\alpha_{per}(\tr^a_{j'}, \tr^a_{j'+2})$ and $\alpha_{per}(\tr^a_{j'+1},\tr^a_{j'+3})$ as introduced above to ensure that $\tr^a_{j'}$ and $\tr^a_{j'+1}$ are periodic and then discard $\tr^a_{j'+2}$ and $ \tr^a_{j'+3}$, as they are only used in these formulas.
Then, we ensure that $\tr^a_{j'}$ has period~$n''$ and that $\tr^a_{j'+1}$ has period~$n''-1$ and then implement the reasoning behind Equation~\ref{eq}.
To this end, consider the formula
\begin{align*}
\psi_3 = &(\alpha_{alt}(\tr_{j'}) \wedge \alpha_{alt}(\tr_{j''})) \rightarrow (\X\F(\intprop_{\tr_{j'}} \wedge\F\intprop_{\tr_{j''}}) \wedge \X\X\F\intprop_{\tr_{j''}} )  \wedge\\
&\alpha_{per}(\tr^a_{j'}, \tr^a_{j'+2}) \wedge \alpha_{per}(\tr^a_{j'+1},\tr^a_{j'+3})\wedge\\
&(\alpha_{alt}(\tr_{j''}) \wedge \alpha_{alt}(\tr_{j'}^a)) \rightarrow \auxprop_{\tr_{j'}^a}\U(\neg\auxprop_{\tr_{j'}^a} \wedge \intprop_{\tr_{j''}})\wedge\\
&(\alpha_{alt}(\tr_{j''}) \wedge \alpha_{alt}(\tr_{j'+1}^a)) \rightarrow \auxprop_{\tr_{j'+1}^a}\U(\neg\auxprop_{\tr_{j'+1}^a} \wedge \X\intprop_{\tr_{j''}})\wedge\\
& \Bigg[ \Bigg. \left( 
\alpha_{alt}(\tr_{j'}^a) \wedge \alpha_{alt}(\tr_{j}) \wedge \alpha_{alt}(\tr_{j'}) \wedge \left( \left[ 
\altprop_{\tr_{j'+1}^a} \leftrightarrow \X\altprop_{\tr_{j'+1}^a}
\right] \U \left[
\intprop_{\tr_{j'}} \wedge \alpha_{alt}(\tr_{j'+1}^a)
\right]\right)
\right) \rightarrow\\
&\qquad\F(\intprop_{\tr_{j'}} \wedge (\neg \alpha_{algn})\U (\alpha_{algn} \wedge \X\intprop_{\tr_{j}})) \Bigg.\Bigg]
\end{align*}
where $\alpha_{algn} = (\auxprop_{\tr^a_{j'}} \leftrightarrow \neg\X\auxprop_{\tr^a_{j'}}) \wedge (\auxprop_{\tr^a_{j'+1}} \leftrightarrow \neg\X\auxprop_{\tr^a_{j'+1}})$ holds if the pointers for $\tr^a_{j'}$ and $\tr^a_{j'+1}$ point to the ends of a block in the respective trace.

Intuitively, the formula expresses the following:
\begin{itemize}
    
    \item The first line holds if we are indeed in the case~$0 < n' \le n''$ with $n'' \ge 2$.
    
    \item The second line holds if the auxiliary traces are indeed periodic and the third and fourth one ensure that they have the right periods, i.e., $\tr^a_{j'}$ has period~$n''$ and that $\tr^a_{j'+1}$ has period~$n''-1$.
    
    \item The fifth line, which is the antecedent of an implication, holds if time progresses normally on $\tr_{j'}^a$, $\tr_{j'}$, and $\tr_{j}$, and if time is frozen on $\tr_{j'+1}^a$ until the (unique) position on the trace assigned to $\tr_{j'}$ is reached where $\intprop$ holds, from which point onward, time on $\tr_{j'+1}^a$ also progresses normally.

    \item If the antecedent holds, then the consequent in the sixth line implements the reasoning of Equation~\ref{eq} as follows (see also Figure~\ref{fig_mult}):
    \begin{itemize}
        \item The eventually updates the pointers of $\tr_{j}$ and $\tr_{j'}^a$ to $n'$ (the unique position where $\intprop$ holds on $\tr_{j'}$), while the pointer of $\tr_{j'+1}^a$ stays zero. These positions are marked with \myquot{$1$} in the figure.

        \item Then, the until operator relates positions~$i+n'$ on $\tr_{j'}^a$ and $i$ on $\tr_{j'+1}^a$ as indicated by the diagonal lines in the figure. Hence, it updates the pointer of $\tr_{j'+1}^a$ to $z \cdot (n''-1) -1$ for the smallest $z>0$ such that $z \cdot (n''-1) = z' \cdot n'' -n'$ for some $z'$. Accordingly, the pointers of $\tr_{j'}^a$ and $\tr_{j}$ are updated to $z \cdot (n''-1) -1 + n'$, as time progresses normally on these traces after the position~$n'$, as required by the precedent. These positions are marked with \myquot{$2$} in the figure.

        \item Finally, the next increments the pointer of $\tr_j$ by one (this position is marked with \myquot{$3$}). Here, $\intprop$ must hold on the trace assigned to $\tr_j$.
    \end{itemize}
\end{itemize}
As argued above, the minimal~$z$ satisfying the above requirement is equal to $n'$, i.e., the pointer of $\tr_j$ after the application of the next operator is then equal to $z \cdot (n''-1) -1 + n' +1  =  n' \cdot n''$.
Hence, we have indeed implemented multiplication for the case~$0 < n' \le n''$ with $n'' \ge 2$.

        \begin{figure}[t]
        \centering
        \begin{tikzpicture}[thick]

    \def\y{1}
        \node at (-.5,\y) {$\tr_{j'}^a$};
        \node at (-.5,0) {$\tr_{j'+1}^a$};
        \node at (-.5,-\y) {$\tr_{j}$};

        \draw[->, > = stealth] (0,\y) -- (13,\y);
        
        \draw[->, > = stealth] (0,0) -- (13,0);

        \draw[->, > = stealth] (0,-\y) -> (13,-\y);
        
        \foreach \i in {0,.5,...,3,7,7.5,...,10}{
        \draw[fill,red] (\i,\y+.1) circle (.03);
        \draw[fill,red] (\i,\y+-.1) circle (.03);
        }

        \foreach \i in {0,.5,...,2.5,6,6.5,...,8.5, 12,12.5}{
        \draw[fill,red] (\i,.1) circle (.03);
        \draw[fill,red] (\i,-.1) circle (.03);
        }

        \foreach \i in {0,.5,...,12.5}{
    \draw (\i,-.1) -- (\i, .1);
    \draw (\i,\y-.1) -- (\i, \y+.1);
    \draw (\i,-\y-.1) -- (\i, -\y+.1);
    \pgfmathtruncatemacro{\result}{2*\i}
        \node at (\i,-1.25) {\scriptsize \result};
  }

      \foreach \i in {0,.5,...,8.5}{
    \draw[<->,> = stealth,thin] (\i, .15) -- (\i+1.5,\y-.15);
  }

        \node[fill=gray!20, circle, minimum size =12,inner sep = 0] at (1.5,\y+.4) {\footnotesize 1};
        \node[fill=gray!20, circle, minimum size =12,inner sep = 0] at (0,.4) {\footnotesize 1};
        \node[fill=gray!20, circle, minimum size =12,inner sep = 0] at (1.5,-\y+0.4) {\footnotesize 1};
        
        \node[fill=gray!20, circle, minimum size =12,inner sep = 0] at (8.5,.4) {\footnotesize 2};
        \node[fill=gray!20, circle, minimum size =12,inner sep = 0] at (10,\y+.4) {\footnotesize 2};
        \node[fill=gray!20, circle, minimum size =12,inner sep = 0] at (10,-\y+.4) {\footnotesize 2};
        
        \node[fill=gray!20, circle, minimum size =12,inner sep = 0] at (10.5,-\y+.4) {\footnotesize 3};

        \end{tikzpicture}
        \caption{The formula~$\psi_3$ implementing multiplication, for $n_1 = 3$ and $n_2 = 7$, i.e., $\tr_{j'}^a$ has period~$7$ and $\tr_{j'+1}^a$ has period~$6$. Here, \myquot{\,\raisebox{-0.25ex}{\begin{tikzpicture}[thick]
\protect\draw(0,-.1) -- (0, .1);
            \protect\draw[fill,red] (0,.1) circle (.03);
\protect\draw[fill,red] (0,-.1) circle (.03);
        \end{tikzpicture}}\,} (\myquot{\,\raisebox{-0.25ex}{\begin{tikzpicture}[thick]
\protect\draw(0,-.1) -- (0, .1);
        \end{tikzpicture}}\,}) denotes a position where $\auxprop$ holds (does not hold).}
        \label{fig_mult}
        
    \end{figure}

Finally, the construction for the last case~$0 < n'' < n'$ is analogous. Let the resulting formula be $\psi_4$.
Then, define
$\hyperize(\nu_j = \nu_{j'} \cdot \nu_{j''}) = \psi_1 \vee \psi_2 \vee \psi_3 \vee \psi_4$.
An induction over the construction of $\phi$ shows that we have $\natsstruct \models \phi$ if and only if $\TS \models \phi'$, where $\TS$ is the transition system introduced above.

Finally, let us consider \eahltl: As second-order arithmetic is closed under negation, checking whether a given sentence~$\phi$ holds in $\natsstruct$, i.e., truth in second-order arithmetic, is equivalent to checking whether $\neg \phi$ does \emph{not} hold in $\natsstruct$.
Thus, applying the reduction above and pushing the negation over the trace and trajectory quantifiers shows that truth in second-order arithmetic can be reduced to model-checking $\TS$ against \eahltl sentences.
\end{proof}

\section{Finite-state Satisfiability}
\label{sec_fssat}

In this section, we consider the finite-state satisfiability problem for \ahltl.
As model-checking is hard for a fixed transition system, one can easily adapt the lower bound proof to show that finite-state satisfiability is also as hard as truth in second-order arithmetic.
Similarly, the upper bound can also be adapted, as one can quantify (encodings of) finite transition systems in second-order arithmetic.
This is in line with similar results for, e.g., \hyqptl~\cite{hyperqptlcomplexity}, and proven along the same lines (see the appendix).

\begin{theorem}
\label{thm_fssat}    
\ahltl finite-state satisfiability is equivalent to truth in second-order arithmetic. The lower bound holds for \aahltl and \eahltl.
\end{theorem}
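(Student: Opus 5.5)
The proof splits into an upper bound and a lower bound, and both reuse the machinery from the proof of Theorem~\ref{thm_mc}.

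\emph{Upper bound.} We translate an \ahltl sentence~$\phi$ into a sentence of second-order arithmetic that holds if and only if $\phi$ has a finite-state model. A transition system $\TS=(V,E,I,\labfunc)$ can be taken, w.l.o.g., to have $V=\set{0,\ldots,m-1}$. Since $\labfunc$ maps each vertex to a finite set of propositions, and only the propositions occurring in $\phi$ matter, $\TS$ is encoded by a single natural number.

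The formula~$\alpha_{trc}$ from the proof of Theorem~\ref{thm_mc} was built for a fixed $\TS$. We make it uniform: $\alpha_{trc}(X,s)$ takes the code~$s$ of a transition system as an extra first-order parameter. It states that there is a set~$R$ encoding a function from $\nats$ to $\set{0,\ldots,m-1}$ that starts in $I$, follows $E$, and agrees with $X$ via $\labfunc$ on the propositions of $\phi$. It also requires that $X$ contains no other propositions.

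We then reuse $\arith(\cdot)$ verbatim with this parametrised $\alpha_{trc}$, and output $\exists s\,(\text{$s$ encodes a transition system}\wedge\exists A_t\,\alpha_{empty}(A_t)\wedge\arith(\phi))$. Correctness follows from the same induction as in Theorem~\ref{thm_mc}. Restricting to the propositions of $\phi$ is harmless, because satisfaction depends only on those propositions.

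\emph{Lower bound.} The reduction in the proof of Theorem~\ref{thm_mc} uses one fixed transition system~$\TS$, and $\traces(\TS)$ can be characterised in \ahltl up to the role of irrelevant propositions. The plan is to map a second-order sentence~$\phi$ to $\phi'' = \chi\wedge\phi'$. Here $\phi'$ is the \aahltl sentence from that proof, and $\chi$ is an \aahltl sentence, combined with $\phi'$ via Remark~\ref{remark_boolcombs}, that forces any model to contain all the traces of $\TS$ the reduction needs.

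Concretely, $\chi$ requires three things:
\begin{itemize}
\item every trace alternates~$\altprop$ and is either a set trace or an auxiliary trace;
\item there is a trace with $\altprop$ at position~$0$;
\item for every trace~$\tr$, there are traces $\tr'$ and $\tr''$ that agree with $\tr$ on $\intprop$ (resp.\ $\auxprop$) everywhere except at position~$0$.
\end{itemize}
The last requirement alone does not give all subsets, so it is not enough. The standard fix, as for \hyqptl and second-order \hyltl, is to use $\forall\tr\exists\tr'\exists\tr''$ with $\tr'$ agreeing with $\tr$ up to a shift by one position and then branching on the first letter. All these requirements are expressed under $\A$ guarded by $\alpha_{alt}$, so that time progresses normally.

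A set of traces closed under these operations contains every trace over $\set{\intprop,\altprop}$ and over $\set{\auxprop,\altprop}$ with alternating $\altprop$. Since $\traces(\TS)$ itself is finite-state and satisfies $\chi$, $\phi''$ has a finite-state model if and only if $\TS\models\phi'$.

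A cleaner alternative avoids the closure argument. We make the reduction relative to a fixed $\TS$ by guarding every quantifier in $\phi'$ with the requirement that its trace is a set or auxiliary trace; this is the role $\alpha_e$ and $\alpha_a$ already play. We then only need that every finite-state model of $\chi$ has exactly $\traces(\TS)$ as its relevant traces. The main obstacle is this step: forcing the model to contain \emph{all} (uncountably many) relevant traces. I would first try the successor/shift closure property, which forces that all finite prefixes extend within the model. However, a finite-state model is $\omega$-regular and closed in the prefix topology, which is exactly what makes prefix-completeness imply containing all traces.

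For \eahltl, we negate $\phi$ and dualise the quantifiers as at the end of the proof of Theorem~\ref{thm_mc}. Here $\chi$ must be re-expressed with $\E$, which works by Remark~\ref{remark_trajaccess}: the trajectory can be existentially required to progress normally.
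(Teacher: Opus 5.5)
Your proposal follows the paper's proof. For the upper bound you quantify a code of a finite transition system and make $\alpha_{trc}$ depend on it, which is what the paper does with a pointer to \cite{genhyltlsc}. For the lower bound you conjoin the sentence from the proof of Theorem~\ref{thm_mc} with a sentence pinning down $\traces(\TS)$, and handle \eahltl by dualisation. The one real difference is the pinning sentence. The paper imports it from \cite[Theorem~6.1]{genhyltlsc}, which uses additional propositions. You build it directly from alternation, the set/auxiliary shape, seed traces, and closure under prepending a letter. You then use that (projections of) trace sets of finite transition systems are closed, so a finite-state model realising every finite prefix contains every relevant trace. Your version is self-contained and needs no extra propositions. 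It also makes explicit that such a sentence can only pin down $\traces(\TS)$ among \emph{finite-state} models. Indeed, by Theorem~\ref{thm_modelsize} every satisfiable sentence has a countable model while $\traces(\TS)$ is uncountable, so the paper's ``every model of $\alpha_m$'' must be read as ``every finite-state model''. Two smaller fixes follow. Drop your intermediate claim that any set closed under your operations contains all relevant traces: the closure of a single seed is countable, and only closedness does the work. Also seed both a set trace and an auxiliary trace.

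One step fails as written, a point the paper also glosses over when it combines $\alpha_m$ with $\phi$ via Remark~\ref{remark_boolcombs}. The requirement that every trace alternates $\altprop$ cannot be ``guarded by $\alpha_{alt}$'', since it then becomes $\alpha_{alt}(\tr)\rightarrow\alpha_{alt}(\tr)$. The unguarded $\G(\altprop_\tr\leftrightarrow\neg\X\altprop_\tr)$ does not work either: it is false under a single $\A$ over several variables, because some trajectories freeze $\tr$. Without this requirement, $\traces(\TS)$ plus the trace $\emptyset^\omega$ is a finite-state model of your $\chi$. That junk trace makes every $\alpha_{alt}$-guarded subformula vacuously true. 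It can therefore witness the existential quantifier in the translation of the false sentence $\exists\nu_0\,(\nu_0<\nu_0)$, and it can be a useless witness for $\tr',\tr''$ in your closure requirement. A working \aahltl formulation is the unguarded conjunct $\G\bigvee_{v}(\altprop_v\leftrightarrow\neg\X\altprop_v)$. Here $v$ ranges over all trace variables of the final sentence, including a fresh outermost universally quantified one. Every step of a trajectory moves some variable, so this holds under all trajectories if every assigned trace alternates. Conversely, the trajectory that only ever moves a non-alternating trace refutes it. Under $\E$, as you note, the plain conjunct $\alpha_{alt}(\tr)$ already suffices by Remark~\ref{remark_trajaccess}.
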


\begin{proof}
Recall that we have reduced \ahltl model-checking to truth in second-order arithmetic by encoding the semantics of \ahltl in second-order arithmetic, using the formula~$\alpha_{trc}$ that checks whether a set~$X$ encodes a trace of a fixed transition system (see the proof of Theorem~\ref{thm_mc}). 
In second-order arithmetic, one can also quantify (an encoding of) a finite transition system and then check whether that transition system satisfies a given \ahltl sentence (cp.~\cite[Theorem~6.3]{genhyltlsc}).
Thus, \ahltl finite-state satisfiability can also be reduced to truth in second-order arithmetic.

For the lower bound, recall that we have reduced truth in second-order arithmetic to model checking \aahltl with respect to a fixed transition system~$\TS$ (see the proof of Theorem~\ref{thm_mc}). 
The set of traces of $\TS$ can be expressed in \aahltl, i.e., we can write a \aahltl sentence~$\alpha_m$ (using additional propositions) such that the projection of every model of $\alpha_m$ to $\set{\intprop,\auxprop,\altprop}$ is $\traces(\TS)$ (cp.~\cite[Theorem~6.1]{genhyltlsc}).
Thus, the finite-state satisfiability of $\alpha_m \wedge \phi$ (brought into prenex normal form (see Remark~\ref{remark_boolcombs})) is equivalent to model-checking $\TS \models \phi$, which is as hard as truth in second-order arithmetic.
\end{proof}

\section{Conclusion}
\label{sec_conc}

We have studied the complexity of satisfiability, finite-state satisfiability, and model-checking for \ahltl and showed that the former problem is $\Sigma_1^1$-complete for existentially quantified trajectories and $\Sigma_1^1$-hard and in $\Sigma^1_2$ for universally quantified trajectories, while the latter two problems are equivalent to truth in second-order arithmetic, where the lower bounds hold for both types of trajectory quantification.
Hence, all three problems are highly undecidable. 

This work extends a line of research that has settled the complexity of synchronous hyperlogics like \hyltl~\cite{hypercomplexity}, \hyqptl~\cite{hyperqptlcomplexity}, and \sohyltl~\cite{frz26} as well as asynchronous ones like generalized \hyltl with stuttering and contexts and its fragments~\cite{genhyltlsc}.
In future work, we aim to extend this to the remaining logics mentioned in Figure~\ref{fig_asynchlogics}.
Also, one can extend \ahltl with arbitrary trajectory quantification~\cite{DBLP:conf/tacas/HsuSB21}. As trajectories are objects of type~$2$, this does not influence the complexity of model-checking and finite-state satisfiability, but we expect the complexity of satisfiability to increase. 

\paragraph*{Acknowledgements.} This work has been supported by the European Union, the project ``Hyperlogics: Expressiveness, Monitorability and Tools (H.-Lo)'' of the Icelandic Research Fund (project no.~2612260-051) and DIREC - Digital Research Centre Denmark.